\documentclass[11pt,letterpaper]{article}
\usepackage[margin=1in]{geometry}
\usepackage{amssymb}
\setcounter{tocdepth}{3}
\usepackage{graphicx}
\usepackage{algorithm}
\usepackage[noend]{algorithmic}
\usepackage{microtype}
\usepackage{subfig}
\usepackage{amsmath}
\usepackage{amsthm}
\usepackage{enumerate}
\usepackage{wrapfig}
\usepackage{multirow}
\usepackage{nicefrac}
\usepackage[colorinlistoftodos,prependcaption,textsize=tiny]{todonotes}
\usepackage[colorlinks,linkcolor=blue,filecolor=blue,citecolor=blue,urlcolor=blue,pdfstartview=FitH,pagebackref]{hyperref}
\usepackage[nameinlink]{cleveref}
\usepackage{authblk}
\usepackage{comment}
\usepackage{setspace}

\usepackage{array}
\newcolumntype{P}[1]{>{\raggedright\arraybackslash}p{#1}}

\usepackage{url}

\definecolor{forestgreen}{rgb}{0.13, 0.55, 0.13}
\definecolor{darkP}{rgb}{1, 0, 1}



\newtheorem{theorem}{Theorem}
\newtheorem{lemma}[theorem]{Lemma}
\newtheorem{claim}[theorem]{Claim}

\newtheorem{problem}[theorem]{Problem}

\newtheorem{remark}[theorem]{Remark}
\newtheorem{corollary}[theorem]{Corollary}


\graphicspath{{./figures/}}

\def\C{\mathcal{C}}
\def\G{\mathcal{G}}
\def\T{\mathcal{T}}
\def\P{\mathcal{P}}
\def\I{\mathcal{I}}
\def\D{\mathcal{D}}

\def\A{\mathcal{A}}
\def\R{{\mathbb R}}
\def\integers{{\mathbb Z}}

\def\eps{{\varepsilon}}

\def\polylog{\rm polylog}
\def\poly{\rm poly}

\newcommand{\frechet}{Fr\'echet}
\newcommand{\dfd}{d_{dF}}
\newcommand{\dtw}{d_{DTW}}
\newcommand{\dpt}{d_{p,2}}

\def\dfn#1{\emph{#1}}



\begin{document}

\title{Approximate Nearest Neighbor for Curves:\\ Simple, Efficient, and Deterministic }

\author[1]{Arnold Filtser}
\author[2]{Omrit Filtser}
\author[3]{Matthew J. Katz}

\affil[1]{Bar-Ilan University\\ \texttt{arnold273@gmail.com}}
\affil[2]{The Open University of Israel\\ \texttt{omrit.filtser@gmail.com}}
\affil[3]{Ben-Gurion University of the Negev\\\ \texttt{matya@cs.bgu.ac.il}}

\date{}

\maketitle
\setcounter{page}{0}
\thispagestyle{empty}

\begin{abstract}
	
	In the \emph{$(1+\eps,r)$-approximate near-neighbor} problem for curves (ANNC) under some similarity measure $\delta$, the goal is to construct a data structure for a given set $\C$ of curves that supports approximate near-neighbor queries: Given a query curve $Q$, if there exists a curve $C\in\C$ such that $\delta(Q,C)\le r$, then return a curve $C'\in\C$ with $\delta(Q,C')\le(1+\eps)r$.
	There exists an efficient reduction from the \emph{$(1+\eps)$-approximate nearest-neighbor} problem to ANNC, where in the former problem the answer to a query is a curve $C\in\C$ with $\delta(Q,C)\le(1+\eps)\cdot\delta(Q,C^*)$, where $C^*$ is the curve of $\C$ most similar to $Q$.
	
	Given a set $\C$ of $n$ curves, each consisting of $m$ points in $d$ dimensions, we construct a data structure for ANNC that uses $n\cdot O(\frac{1}{\eps})^{md}$ storage space and has $O(md)$ query time (for a query curve of length $m$), where the similarity measure between two curves is their discrete \frechet\ or dynamic time warping distance. Our method is simple to implement, deterministic, and results in an exponential improvement in both query time and storage space compared to all previous bounds.
	
	Further, we also consider the \emph{asymmetric} version of ANNC, where the length of the query curves is $k \ll m$, and obtain essentially the same storage and query bounds as above, except that $m$ is replaced by $k$.
	Finally, we apply our method to a version of approximate range counting for curves and achieve similar bounds.
\end{abstract}

\vfill
\begin{spacing}{0.5}
{\small \setcounter{tocdepth}{1} \tableofcontents}
\end{spacing}
\newpage

\section{Introduction}

Nearest neighbor search is a fundamental and well-studied problem that has various applications in machine learning, data analysis, and classification. This important task also arises in applications where the recorded instances are trajectories or polygonal curves modeling, for example, epigenetic and surgical processes, market value fluctuations, population growth, the number of the requests per hour received at some web-page, and even the response of a football player in a given situation.

Let $\C$ be a set of $n$ curves, each consisting of at most $m$ points in $d$ dimensions, and let $\delta$ be some distance measure for curves. In the \emph{nearest-neighbor} problem for curves, the goal is to construct a data structure for $\C$ that supports nearest-neighbor queries, that is, given a query curve $Q$ of length at most $m$, return the curve $C^* \in \C$ closest to $Q$ (according to $\delta$).
The approximation version of this problem is the \emph{$(1+\eps)$-approximate nearest-neighbor} problem, where the answer to a query $Q$ is a curve $C\in\C$ with $\delta(Q,C)\le(1+\eps)\delta(Q,C^*)$.
We study a decision version of this approximation problem, which is called the \emph{$(1+\eps,r)$-approximate near-neighbor} problem for curves (ANNC). Here, if there exists a curve in $\C$ that lies within distance $r$ of the query curve $Q$, one has to return a curve in $\C$ that lies within distance $(1+\eps)r$ of $Q$.
Note that there exists a reduction from the $(1+\eps)$-approximate nearest-neighbor problem to the $(1+\eps,r)$-approximate near-neighbor problem~\cite{HIM12,Indyk00,shakhnarovich2006nearest}, at the cost of an additional logarithmic factor in the query time and an $O(\log^2 n)$ factor in the storage space.

In practice, it is often the case that the query curves are significantly shorter than the input curves (e.g., Google-search queries). Thus, we also study the \emph{asymmetric setting} of $(1+\eps,r)$-ANNC, where each of the input curves has complexity at most $m$, while each query curve has complexity at most $k\ll m$.

There are many methods that are used in real-world applications for comparing curves, and one of the most prevalent is the (discrete) \emph{\frechet} distance (DFD for short), which is often described by the following analogy. Two frogs are hopping from vertex to vertex along two polygonal curves. At each step, one of the frogs or both frogs may advance to the next vertex on its curve. The discrete \frechet\ distance is defined as the smallest maximum distance between the frogs that can be achieved in such a joint sequence of hops.
Another useful distance measure for curves or time series is the \emph{dynamic time warping} distance (DTW for short), in which instead of taking the smallest maximum distance we take the smallest sum of distances.

In the last several years, a series of papers have been written investigating the approximate near-neighbor problem for curves (ANNC) and its variants under the \frechet\ distance \cite{AD18,AFHKS19,dBCG13,dBGM17,DS17,EP20,Indyk02} (see~\Cref{tbl:results}), and several different approaches and sophisticated methods were utilized in order to provide efficient data structures. Up to now, all data structures for ANNC under DFD have either an exponential in $m$ query time, or an infeasible storage space bound. In this paper, for the first time, we manage to remove the exponential factor from the query time, while also significantly reducing the space consumption. Our approach consists of a discretization of space based on the input curves, which allows us to prepare a small set of curves that captures all possible queries approximately.

Indyk~\cite{Indyk02} was the first to give a deterministic near-neighbor data structure for curves under DFD. The data structure achieves an approximation factor of $O((\log m + \log \log n)^{t-1})$ given some trade-off parameter $t>1$. Its space consumption is very high, $O(m^2|X|)^{tm^{1/t}}\cdot n^{2t}$, where $|X|$ is the size of the domain on which the curves are defined, and the query time is $(m\log n)^{O(t)}$.
In \Cref{tbl:results} we set $t = 1 + o(1)$ to obtain a constant approximation factor.

Later, Driemel and Silvestri~\cite{DS17} presented a locality-sensitive-hashing scheme for curves under DFD, improving the result of Indyk for short curves. Their data structure uses $O(2^{4md}n\log n)$ space and answers queries in $O(2^{4md}\log n)$ time with an approximation factor of $O(d^{3/2})$. They also provide a trade-off between approximation quality and computational performance: for $d=O(1)$, and given a parameter $k\in[m]$, a data structure of size $O(2^{2k}m^{k-1} n\log n + mn)$ is constructed that answers queries in $O(2^{2k}m^k \log n)$ time with an approximation factor of $O(m/k)$. For constant $k$, this data structure uses only $\poly(m)\cdot n\log n$ space and has $\poly(m)\cdot\log n$ query time, however, the approximation factor is $O(m)$. They also show that this result can be applied to DTW, but only for the extreme of the trade-off which gives an $O(m)$ approximation.

Recently, Emiris and Psarros~\cite{EP20} presented near-neighbor data structures for curves under both DFD and DTW. Their algorithm provides an approximation factor of $(1+\eps)$, at the expense of increased space usage and preprocessing time. They use the idea that for a fixed alignment between two curves (i.e., a given sequence of hops of the two frogs), the problem can be reduced to the near-neighbor problem for points in $\ell_\infty$-product of $\ell_2$ spaces.
Their basic idea is to construct a data structure for every possible alignment. Once a query is given, they query all these data structures and return the closest curve found. This approach is responsible for the $2^m$ factor in their query time. 
Furthermore, they generalize this approach using randomized projections of $\ell_p$-products of Euclidean metrics (for any $p\ge1$), and define the $\ell_{p,2}$-distance for curves (for $p\ge 1$), which is exactly DFD when $p=\infty$, and DTW when $p=1$ (see \Cref{sec:prelim}). The space used by their data structure is $\tilde{O}(n)\cdot(2+\frac{d}{\log m})^{O(m^{1+1/\eps}\cdot d\log(1/\eps))}$ with query $\tilde{O}(dm^{1+1/\eps}\cdot 2^{4m}\log n)$ for DFD and $\tilde{O}(n)\cdot \frac{1}{\eps}^{O(md)}$ space and $\tilde{O}(d\cdot 2^{4m}\log n)$ query for DTW.

Har-Peled and Kumar~\cite{HK11} considered approximate nearest-neighbor for general metric spaces where the query points are constrained to lie on a subspace of low doubling dimension. It is easy to show that the doubling dimension of the metric space for curves of length $m$ in $d$ dimensions under DFD, is bounded by $O(md)$. Therefore, their result implies that one can construct a $(1+\eps)$-approximate nearest-neighbor data-structure requiring space in $n\eps^{-O(md)}$ and with query time in $2^{O(md)}\log(n/\eps)$. Note that in \cite{DS17} and \cite{EP20} the query time is exponential in $md$ and $m$, respectively, while the space complexity is suboptimal (and in \cite{DS17}, the approximation factor is $O(d^{3/2})$ and not $(1+\eps)$). In this paper, we get the exact same space as in ~\cite{HK11} (up to a $\log n$ factor), which seems to be optimal, while the query time is linear in $md$ (using the more structured input).

\paragraph{Subsequent work.}In a recent work, Bringmann et al.~\cite{BDNP22} study the asymmetric setting of $(1+\eps,r)$-ANNC under continuous \frechet\ distance, for one-dimensional polygonal curves (time series). Improving upon the work of Driemel and Psarros~\cite{DP21}, they use the idea of signatures to obtain some sort of a discrete version of the problem, and then follow our approach of preparing in advance the answers to all relevant queries on a discretization of the space (which was also used in~\cite{DP21}), to construct a data structure with space in $n\cdot O\left( \frac{m}{k\eps}\right)^{k}$ and query time in $O(1)^k$. They also show that an approximation factor of $(2+\eps)$ can be obtained with the same space bound and $O(k)$ query time. This provides additional evidence that our approach to ANNC, although quite simple and easy to implement, seems to produce more efficient data structures than those obtained using tools such as LSH and randomized projections. 
In addition, \cite{BDNP22} present conditional lower bounds for several different settings of the problem.

\paragraph{Our results.}
We present a data structure for the $(1+\eps,r)$-approximate near-neighbor problem using a bucketing method. We construct a relatively small set of curves $\I$, such that, given a query curve $Q$, if there exists some curve in $\C$ within distance $r$ of $Q$, then one of the curves in $\I$ must be very close to $Q$. The points of the curves in $\I$ are chosen from a simple discretization of space, thus, while it is not surprising that we get the best query time, it is surprising that we achieve a better space bound. Moreover, while the analysis of the space bounds is rather involved, the implementation of our data structures remain simple in practice.

See \Cref{tbl:results} for a summary of our results.
In the table, we do not state our result for the general $\ell_{p,2}$-distance. 
Instead, we state our results for the two most important cases, i.e. DFD and DTW, and compare them with previous work. 
Note that our results substantially improve the current state of the art for any $p\ge 1$.
In particular, we remove the exponential dependence on $m$ in the query bounds and significantly improve the space bounds.
 
Our results for the asymmetric setting, where the query curve $Q$ has complexity $k\ll m$, are summarized in \Cref{tbl:AssResults}. We show that in the asymmetric setting for DFD, our data structure can be slightly modified in order to achieve query time and storage space independent of $m$. Moreover, the storage space and query time matches those of the symmetric setting, by replacing $m$ with $k$.

We also apply our methods to an approximation version of range counting for curves (for the general $\ell_{p,2}$ distance) and achieve bounds similar to those of our ANNC data structure.
Moreover, at the cost of an additional $O(n)$-factor in the space bound, we can also answer the corresponding approximation version of range searching, thus answering a question of Afshani and Driemel~\cite{AD18}, with respect to DFD.

We note that our approach with obvious modifications works also in a dynamic setting, that is, we can construct an efficient dynamic data structure for ANNC as well as for other related problems such as range counting and range reporting for curves.

Another significant advantage of our approach is that, unlike some of the previous solutions, our data structure always returns an answer, and never returns a curve at distance greater than $(1+\eps)r$ from the query curve, i.e., there are no false positives. This is an important property of our solution, due to the fact that verifying the validity of the answer (i.e., computing the distance between two curves) cannot be done in strongly subquadratic time (assuming SETH, see~\cite{Bringmann14}), which is already more than our query time (for $d<m$).

\begin{table}[t]
	\begin{tabular}{ | l | l | l | l | P{3.35cm} |}
		\hline
		& Space & Query & Approx. & Comments \\ \hline\hline
		\multirow{4}{*}{DFD}  
		& $O(m^2|X|)^{m^{1-o(1)}}\cdot n^{2+o(1)}$ & $(m \log n)^{O(1)}$ & $O(1)$ &  {\scriptsize deterministic, \cite{Indyk02}} \\ \cline{2-5}
		& $O(2^{4md}n\log n)$ & $O(2^{4md}\log n)$ & $O(1)$ & \multirow{2}{3.35cm}{\scriptsize randomized, using LSH, $d=O(1)$, \cite{DS17}} \\ \cline{2-4}
		& $O(n\log n+mn)$ & $O(m\log n)$ & $O(m)$ & \\ \cline{2-5}
		& {$\tilde{O}(n)\cdot(2+\frac{d}{\log m})^{O(m^{O(\frac{1}{\eps})} d\log(\frac{1}{\eps}))}$} & {$\tilde{O}(dm^{O(\frac{1}{\eps})}\cdot 2^{4m}\log n)$} & $1+\eps$ & {\scriptsize randomized, \cite{EP20}} \\ \cline{2-5}
		& $n\cdot O(\frac{1}{\eps})^{md}$ & $O(md)$ & $1+\eps$ & {\scriptsize deterministic (randomized construction),
			\Cref{thm:DFD}}\\ \hline\hline
		
		\multirow{4}{*}{DTW}  & $O(n\log n+mn)$ & $O(m\log n)$ & $O(m)$ & {\scriptsize randomized, using LSH, $d=O(1)$, \cite{DS17}} \\ \cline{2-5}
		& $\tilde{O}(n)\cdot\frac{1}{\eps}^{O(md)}$ & $\tilde{O}(d\cdot2^{4m}\log n)$ & $1+\eps$ & {\scriptsize randomized, \cite{EP20}} \\ \cline{2-5}
		& $n\cdot O(\frac{1}{\eps})^{m(d+1)}$ & $O(md)$ & $1+\eps$ & {\scriptsize deterministic (randomized construction),
			\Cref{thm:DTW}}\\ \hline
	\end{tabular}
	\caption{\label{tbl:results}\small Our approximate near-neighbor data structure under DFD and DTW compared to the previous results.}
	\vspace{-10pt}
\end{table}

\begin{table}[t]\begin{center}		
		\begin{tabular}{ | l | l  | P{2.3cm} |l|}
			\hline Space & Query  & Deterministic construction?& Reference \\ \hline \hline
			$n\cdot \left(O(\frac{kd^{3/2}}{\eps})^{kd}\right)$ &  $O(kd)$ & no& \cite{DPS19}\\\hline
			$n\cdot \left(O(\frac{kd^{3/2}}{\eps})^{kd+1}\right)$ &  $O(\frac{k^2d^{5/2}}{\eps}(\log n+kd\log(\frac{kd}{\eps})))$ & yes& \cite{DPS19}\\\hline
			$n\cdot O(\frac{1}{\eps})^{kd}$ & $O(kd)$ & no& \Cref{thm:DFDassym}\\\hline
			$n\cdot O(\frac{1}{\eps})^{kd}$ & $O(kd\log(\frac{nkd}{\eps}))$ & yes& \Cref{thm:DFDassym-deterministic}\\ \hline
		\end{tabular}
	\end{center}
	\vspace{-11pt}
	\caption{\label{tbl:AssResults}\small Summary of previous and current results for the asymmetric approximate near-neighbor data structure for curves. All the results in the table are w.r.t. DFD. 
		Setting: the input consists of $n$ curves with $m$ points each, the query curve is of length $k$, 
		the approximation ratio is $1+\eps$ for $\eps\in (0,1)$, and our data structures always succeed. 
		Historic note: \cite{DPS19} is a subsequent work to the first version of this paper. In this version we also apply our counting techniques to the asymmetric cases.}
	\vspace{-10pt}
\end{table}

\subsection{Related work}\label{sec:related}
De Berg, Gudmundsson, and Mehrabi~\cite{dBGM17} described a dynamic data structure for approximate nearest neighbor for curves (which can also be used for other types of queries such as range reporting), under the (continuous) \frechet\ distance. 
Their data structure uses $n\cdot O\left(\frac{1}{\eps}\right)^{2m}$ space and has $O(m)$ query time, but with an additive error of $\eps \cdot reach(Q)$, where $reach(Q)$ is the maximum distance between the start vertex of the query curve $Q$ and any other vertex of $Q$. Furthermore, their query procedure might fail when the distance to the nearest neighbor is relatively large.

Afshani and Driemel~\cite{AD18} studied (exact) range searching under both the discrete and continuous \frechet\ distance. In this problem, the goal is to preprocess $\C$ such that given a query curve $Q$ of length $m_q$ and a radius $r$, all the curves in $\C$ that are within distance $r$ from $Q$ can be found efficiently.
For DFD, their data structure uses $O(n(\log \log n)^{m-1})$ space and has $O(n^{1-\frac{1}{d}}\cdot\log^{O(m)}n\cdot m_q^{O(d)})$ query time, where $m_q$ is limited to $\log^{O(1)}n$. 
Additionally, they provide a lower bound in the pointer model, stating that every data structure with $Q(n) + O(k)$ query time, where $k$ is the output size, has to use roughly $\Omega\left((n/Q(n))^2\right)$ space in the worst case (even for $m_q=1$).
Afshani and Driemel conclude their paper by asking whether more efficient data structures might be constructed if one allows approximation.

De Berg, Cook IV, and Gudmundsson~\cite{dBCG13}, considered the following range counting problem under the continuous \frechet\ distance. 
Given a polygonal curve $C$ with $m$ vertices, they show how to preprocess it into a data structure of size $O(k \cdot\polylog (m))$, so that, given a query segment $s$, one can return a constant approximation of the number of subcurves of $C$ that lie within distance $r$ of $s$ in $O(\frac{m}{\sqrt{k}} \cdot\polylog (m))$ time, where $k$ is a parameter between $m$ and $m^2$.

Aronov et al.~\cite{AFHKS19} managed to obtain practical bounds for two cases of the asymmetric $(1+\eps,r)$-ANNC under DFD: (i) when $Q$ is a line segment (i.e., $k=2$), or (ii) when $\C$ consists of line segments (i.e., $m=2$). The bounds on the size of the data structure and query time are nearly linear in the size of the input and query curve, respectively. Specifically, for the case where $k=2$, they achieve query time $O(\log^4(\frac{n}{\eps}))$ and storage space $O(n\frac{1}{\eps^4}\log^4(\frac{n}{\eps}))$.
They also  provide efficient data structures for several other variants of the problem: the (exact) NNC where $\ell_\infty$ is used for interpoint distances, and the case where the location of the input curves is only fixed up to translation.

\subsection{Technical ideas}
We use a discretization of the space, by laying a $d$-dimensional uniform grid with edge length $\frac{\eps r}{\sqrt{d}}$. The main ingredient in our data structure is then a relatively small set $\I$ of curves defined by grid points, which represents all possible queries. For each curve in $\I$ we store an index of a close enough curve from the input set $\C$. Given a query $Q$ sufficiently close to some curve in $\C$, we find a representative $Q'$ in $\I$ by simply rounding $Q$'s vertices and return the index of the curve stored for $Q'$.

Given a point $x\in \R^d$, the number of grid points that are within distance $(1+\eps)r$ from $x$ is bounded by $O(\frac{1}{\eps})^d$ (\Cref{cor:gridDFD}). Thus, given a curve $C$ of length $m$, the total number of grid points that are within distance $(1+\eps)r$ from one of its vertices is $m\cdot O(\frac{1}{\eps})^d$. Naively, the number of curves needed to represent all possible queries of length $m$ within distance $r$ of $C$ is bounded by the number of ways to choose $m$ points with repetitions from a set of grid points of size $m\cdot O(\frac{1}{\eps})^d$, which is bounded by $m^m\cdot O(\frac{1}{\eps})^{md}$.
This infeasible bound on the storage space might be the reason why more sophisticated solutions for ANNC have been suggested throughout the years.

One of the main technical contributions of this paper is an analysis leading to a significantly better bound, if we store only candidate curves that are within distance $(1+\eps)r$ from $C$. Actually, in~\Cref{sec:frechet} we show that for the case of DFD, it is sufficient to store a set of representative curves of size only $O(\frac{1}{\eps})^{md}$ for each input curve. The basic idea is to bound the number of representatives that can be obtained by some fixed alignment between $C$ and the candidate curve (see~\Cref{clm:curvesDFD}).

For the general case of $\ell_{p,2}$-distance (including DTW), we are minimizing the sum of distances instead of the maximum distance (as in DFD). Thus, we have to use a more dense grid (with edge length $\frac{\eps r}{(2m)^{1/p}\sqrt{d}}$), and the situation becomes more complicated. First, unlike DFD, the triangle inequality does not hold for $\ell_{p,2}$-distance in general (including DTW). Second, since DFD is a min-max measure, the choice of different vertices for a representative curve is ``independent'' in a sense, whereas for $\ell_{p,2}$-distance in general, the choice of different vertices depends on their sum of distances from the input curve. Using more careful counting arguments and analysis of the alignment between two curves, we are able to show that in this case the number of representative curves that our data structure has to store per input curve is bounded by $O(\frac{1}{\eps})^{m(d+1)}$ (see~\Cref{clm:curves}).

To store the set $\I$ we simply use a dictionary, which can be implemented using a hash table and guarantees a query time linear in the size of the query. To obtain a fully deterministic solution, one can use a search tree instead. However, a naive implementation using a binary search tree results in an additional factor of $O(\log|\I|)=O(md\log(\frac{n}{\eps}))$ to the query time, i.e., in a query time of $O(m^2d^2\log(\frac{n}{\eps}))$. We show how to implement the dictionary using a prefix tree, exploiting the fact that the vertices of the curves in $\I$ are from a relatively small set of grid points, which improves the query time to $O(md\log(\frac{nmd}{\eps}))$.

For the asymmetric setting (where the length of a query is $k\ll m$), we use simplifications of the input curves in order to obtain bounds that are independent of $m$. Given a curve $C$ of length $m$, a simplification $\Pi$ of $C$ is a curve of length $k\ll m$ that is relatively close to $C$. Simplifications were used in order to provide approximate solutions in several asymmetric versions of problems on curves, such as clustering~\cite{BDGHKLS19}, and distance oracles~\cite{DH13,DPS19}.

By the triangle inequality for DFD, every query curve $Q$ within distance $r$ from an input curve $C$ is at distance at most $2r$ from the simplification $\Pi$ (where $\Pi$ is within distance $r$ from $C$). Thus, it is enough to prepare for query curves at distance at most $2r$ from $\Pi$, which follows from previous arguments. Note that the query time and storage space are independent of $m$. For the asymmetric setting under general $\ell_{p,2}$-distance, the situation again becomes more complicated. First, we present an algorithm that computes the closest (vertex-restricted) simplification of length $k$ to a given curve of length $m$ under the $\ell_{p,2}$-distance. 
	In order to adapt our data structure to the asymmetric setting, we need to increase the allowed distance between a simplification and a representative curve by a factor of $k^{1/p}$, for the triangle inequality to work. The counting arguments that we use for the symmetric case yield a bound of $O(\frac{k^{1/p}}{\eps})^{k(d+1)}$ on the storage space. In \Cref{subsec:asym_lp}, we provide stronger counting arguments, which enable us to remove the $k^{1/p}$ factor from the base of the exponent. The main idea is to use the simplification in order to divide the input curve into $O(k)$ compact subsequences (see \Cref{clm:curve-partition}).

\section{Preliminaries}\label{sec:prelim}
To simplify the presentation, we assume throughout the paper that all the input curves have exactly the same size, $m$, and all the query curves have exactly the same size, either $m$ or $k$, depending on whether we are considering the standard or the asymmetric version. This assumption can be easily removed (see \Cref{appendix:DifferentSizes}).

Let $\C$ be a set of $n$ curves, each consisting of $m$ points in $\mathbb{R}^d$, and let $\delta$ be some distance measure for curves. 

\begin{problem}[$(1+\eps)$-approximate nearest-neighbor for curves]
	Given a parameter $0<\eps\le 1$, preprocess $\C$ into a data structure that given a query curve $Q$, returns a curve $C' \in \C$, such that $\delta(Q,C') \le (1+\eps)\cdot\delta(Q,C)$, where $C$ is the curve in $\C$ closest to $Q$.
\end{problem}

\begin{problem}[$(1+\eps,r)$-approximate near-neighbor for curves]
	Given a parameter $r$ and $0<\eps\le 1$, preprocess $\C$ into a data structure that given a query curve $Q$, if there exists a curve $C_i\in \C$ such that $\delta(Q,C_i)\le r$, returns a curve $C_j\in \C$ such that $\delta(Q,C_j)\le (1+\eps)r$.
\end{problem}

\begin{problem}[Asymmetric $(1+\eps,r)$-approximate near-neighbor for curves]
	Given parameters $r$,$k$, and $0<\eps\le 1$, preprocess $\C$ into a data structure that given a query curve $Q$ of length $k$, if there exists a curve $C_i\in \C$ such that $\delta(Q,C_i)\le r$, returns a curve $C_j\in \C$ such that $\delta(Q,C_j)\le (1+\eps)r$.
\end{problem}

\paragraph{Curve alignment.}
Given two integers $m_1,m_2$, let $\tau := \langle(i_1,j_1),\dotsc,(i_t,j_t)\rangle$ be a sequence of pairs where $i_1=j_1=1$, $i_t=m_1$,$j_t = m_2$, and for each $1 < k \leq t$, one of the following conditions holds:
\begin{enumerate}[(i)]
	\item $i_k = i_{k-1} + 1$ and $j_k = j_{k-1}$,
	\item $i_k = i_{k - 1}$ and $j_k = j_{k-1} + 1$, or
	\item $i_k = i_{k-1} + 1$ and $j_k = j_{k-1}+1$.
\end{enumerate}
We call such a sequence $\tau$ an \emph{alignment} w.r.t. two curves of lengths $m_1$ and $m_2$, respectively.

\hfill \\
\noindent 
Let $P = (p_1,\dotsc,p_{m_1})$ and $Q = (q_1,\dotsc,q_{m_2})$ be two curves of lengths $m_1$ and $m_2$, respectively, in $\mathbb{R}^d$. We say that an alignment $\tau$ w.r.t. $P$ and $Q$ matches $p_i$ and $p_j$ if $(i,j)\in\tau$.

\paragraph{Discrete \frechet\ distance (DFD).}
The \dfn{\frechet\ cost} of an alignment $\tau$ w.r.t. $P$ and $Q$ is $\sigma_{dF}(\tau(P,Q)):=\max_{(i,j)\in\tau} \|p_i-q_j\|_2$.
The discrete \frechet\ distance is defined over the set $\T$ of all alignments as 
\[\dfd(P,Q)=\min_{\tau \in \T}\sigma_{dF}(\tau(P,Q)).\]

\vspace{-10pt}
\paragraph{Dynamic time wrapping (DTW).}
The \dfn{time warping cost} of an alignment $\tau$ w.r.t. $P$ and $Q$ is $\sigma_{DTW}(\tau(P,Q)):=\sum_{(i,j)\in\tau} \|p_i-q_j\|_2$.
The DTW distance is defined over the set $\T$ of all alignments as 
\[\dtw(P,Q)=\min_{\tau \in \T}\sigma_{DTW}(\tau(P,Q)).\]

\vspace{-10pt}
\paragraph{$\ell_{p,2}$-distance for curves.} 
The \dfn{$\ell_{p,2}$-cost} of an alignment $\tau$ w.r.t. $P$ and $Q$ is $\sigma_{p,2}(\tau(P,Q)):=\left(\sum_{(i,j)\in\tau}\|p_i-q_j\|_2^p\right)^{1/p}$.
The $\ell_{p,2}$-distance between $P$ and $Q$ is defined over the set $\T$ of all alignments as 
\[
\dpt(P,Q)=\min_{\tau \in \T}\sigma_{p,2}(\tau(P,Q)).
\]

Notice that $\ell_{p,2}$-distance is a generalization of DFD and DTW, in the sense that $\sigma_{dF}=\sigma_{\infty,2}$ and $\dfd=d_{\infty,2}$, $\sigma_{DTW}=\sigma_{1,2}$ and $\dtw=d_{1,2}$. Also note that DFD satisfies the triangle inequality, but DTW and $\ell_{p,2}$-distance (for $p\ne\infty$) do not (see~\Cref{sec:l_p2-distance} for details).

Emiris and Psarros~\cite{EP20} showed that the number of all possible alignments of two curves is in $O(m\cdot 2^{2m})$. We reduce this bound by counting only alignments that can determine the $\ell_{p,2}$-distance between two curves.\footnote{Since our storage space is already in $O(\frac{1}{\eps})^{md}$, and $m\cdot 2^{2m}\le 3^{2m}$ is in $O(1)^{md}$,
we could have used this larger upper bound. However, in \Cref{lem:alignments} we show a tight upper bound on the number of relevant alignments, which may be useful for other applications.} More formally, let $\tau$ be an alignment. If there exists an alignment $\tau'$ such that $\tau'\subset\tau$, then clearly $\sigma_{p,2}(\tau'(P,Q))\le\sigma_{p,2}(\tau(P,Q))$, for any $1\le p \le \infty$ and for any two curves $P$ and $Q$. In this case, we say that $\tau$ cannot determine the $\ell_{p,2}$-distance between two curves.
\begin{lemma}\label{lem:alignments}
	The number of different alignments that can determine the $\ell_{p,2}$-distance between two $m$-curves (for any $1\le p \le \infty$) is at most $O(\frac{2^{2m}}{\sqrt{m}})$.
\end{lemma}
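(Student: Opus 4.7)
The plan is to identify the alignments that can determine the $\ell_{p,2}$-distance (call them \emph{irreducible}) combinatorially, and then inject them into the set of monotone $H/V$-words of length $2(m-1)$, whose cardinality is the central binomial coefficient $\binom{2(m-1)}{m-1}=O(2^{2m}/\sqrt{m})$. For brevity I label each consecutive transition in an alignment $\tau$ by $H$, $V$, or $D$ according to whether condition (i), (ii), or (iii) of the alignment definition is satisfied; the endpoint constraints $i_1=j_1=1$ and $i_t=j_t=m$ force $h+d=v+d=m-1$, so in particular $h=v$.

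The first step is to characterize irreducibility: $\tau$ admits a proper sub-alignment $\tau'\subsetneq\tau$ if and only if its step sequence contains $HV$ or $VH$ as a substring. The forward direction is immediate, since collapsing a consecutive $HV$ or $VH$ pair into a single $D$-step yields a valid shorter alignment. For the converse, any proper sub-alignment arises by deleting a nonempty set of interior pairs from $\tau$; deleting two or more consecutive interior pairs would force a single ``super-step'' to combine at least three elementary steps, but any such sum advances at least one coordinate by $\ge 2$ and cannot be a legal step. Hence every deletion must be isolated, and a single deletion at position $k$ is valid precisely when the two adjacent elementary steps sum to $(1,1)$, i.e.\ form an $HV$ or $VH$ pair.

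Next, I define the expansion map $\phi$ that sends an irreducible $\tau$ to the word $\omega\in\{H,V\}^{2(m-1)}$ obtained by keeping every $H$- and $V$-step as a single symbol while replacing each $D$-step by the two-symbol substring $HV$. By construction $\omega$ contains exactly $m-1$ copies of $H$ and $m-1$ of $V$, so it lies in a set of size $\binom{2(m-1)}{m-1}=O(2^{2m}/\sqrt{m})$ by standard central-binomial asymptotics.

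The main obstacle is injectivity of $\phi$ on irreducible alignments. An alignment $\tau$ partitions $\omega=\phi(\tau)$ into blocks of length one (a singleton $H$- or $V$-step) or length two (an $HV$-pair representing a $D$-step); the partition is determined by the set of cut positions between consecutive symbols of $\omega$. Suppose $\tau_1\ne\tau_2$ are irreducible with $\phi(\tau_1)=\phi(\tau_2)=\omega$, and let $p$ be the leftmost position whose cut status differs in the two partitions; without loss of generality $\tau_1$ has a cut after $p$ while $\tau_2$ does not. Then $\tau_2$'s block containing $p$ must span $\{p,p+1\}$ as an $HV$-block, forcing $\omega[p]=H$ and $\omega[p+1]=V$; and by the minimality of $p$ as the first disagreement, $\tau_1$'s block ending at $p$ is the singleton $H$-step at position $p$, after which $\tau_1$ must place a singleton $V$-step at position $p+1$. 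This produces an $HV$ adjacency in $\tau_1$'s step sequence and contradicts its irreducibility. Hence $\phi$ is injective and the desired bound follows.
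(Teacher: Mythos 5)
Your proof is correct and takes essentially the same route as the paper: both arguments injectively expand each diagonal step of an irreducible alignment into a two-step $H/V$ pair, landing in the set of diagonal-free alignments (monotone words with $m-1$ symbols of each type), of size ${2m-2 \choose m-1}=O(2^{2m}/\sqrt{m})$. Your treatment is somewhat more careful than the paper's on two points it leaves implicit — the characterization of reducibility via $HV$/$VH$ substrings and the leftmost-disagreement argument for injectivity — but the underlying bijective idea is identical.
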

\begin{proof}
	Let $\tau = \langle(i_1,j_1),\dotsc,(i_t,j_t)\rangle$ be an alignment. Notice that $m\le t\le 2m-1$.
	By definition, $\tau$ has 3 types of (consecutive) subsequences of length two:
	\begin{enumerate}[(i)]
		\item $\langle(i_k,j_k),(i_k+1,j_k)\rangle$,
		\item $\langle(i_k,j_k),(i_k,j_k+1)\rangle$, and
		\item $\langle(i_k,j_k),(i_k+1,j_k+1)\rangle$.
	\end{enumerate}
	
	Denote by $\T_1$ the set of all alignments that do not contain any subsequence of type (iii). Then, any $\tau_1\in \T_1$ is of length exactly $2m-1$. Moreover, $\tau_1$ contains exactly $2m-2$ subsequences of length two, of which $m-1$ are of type (i) and $m-1$ are of type (ii). Therefore, $|\T_1|={2m-2 \choose m-1}=O(\frac{2^{2m}}{\sqrt{m}})$.
	
	Assume that an alignment $\tau$ contains a subsequence of the form $(i_k,j_k-1),(i_k,j_k),(i_k+1,j_k)$, for some $1 < k \le t-1$. Notice that removing the pair $(i_k,j_k)$ from $\tau$ results in a legal alignment $\tau'$, such that $\sigma_{p,2}(\tau'(P,Q))\le\sigma_{p,2}(\tau(P,Q))$, for any $1\le p \le \infty$ and two curves $P,Q$. We call the pair $(i_k,j_k)$ a \emph{redundant pair}. Similarly, if $\tau$ contains a subsequence of the form $(i_k-1,j_k),(i_k,j_k),(i_k,j_k+1)$, for some $1 < k \le t-1$, then the pair $(i_k,j_k)$ is also a redundant pair. Therefore we only care about alignments that do not contain any redundant pairs.
	Denote by $\T_2$ the set of all alignments that do not contain a redundant pair, then any $\tau_2\in \T_2$ contains at least one subsequence of type (iii).
	
	We claim that for any alignment $\tau_2\in\T_2$, there exists a unique alignment $\tau_1\in\T_1$. Indeed, if we add the redundant pair $(i_l,j_l+1)$ between $(i_l,j_l)$ and $(i_l+1,j_l+1)$ for each subsequence of type (iii) in $\tau_2$, we obtain an alignment $\tau_1\in\T_1$. Moreover, since $\tau_2$ does not contain any redundant pairs, the reverse operation on $\tau_1$ results in $\tau_2$. Thus we obtain $|\T_2|\le |\T_1|=O(\frac{2^{2m}}{\sqrt{m}})$.
\end{proof}
\paragraph{Points and balls.}
Given a point $x\in\R^d$ and a real number $R > 0$, we denote by $B^d_p(x,R)$ the $d$-dimensional ball under the $\ell_p$ norm with center $x$ and radius $R$, i.e., a point $y\in \R^d$ is in $B^d_p(x,R)$ if and only if $\|x-y\|_p\le R$, where $\|x-y\|_p=\left(\sum_{i=1}^{d}|x_i-y_i|^p\right)^{1/p}$.
Let $B^d_p(R)=B^d_p(\textbf{0},R)$, and let $V^d_p(R)$ be the volume (w.r.t. Lebesgue measure) of $B^d_p(R)$, then 
\[V^d_p(R)=\frac{2^d\Gamma(1+1/p)^{d}}{\Gamma(1+d/p)}R^d,\] 
where $\Gamma(\cdot)$ is Euler's Gamma function (an extension of the factorial function).
For $p=2$ and $p=1$, we get \[V^d_2(R)=\frac{\pi^{d/2}}{\Gamma(1+d/2)}R^d \mbox{\ \ \ and\ \ \ } V^d_1(R)=\frac{2^d}{d!}R^d.\]

Our approach consists of a discretization of the space using lattice points, i.e., points from $\mathbb{Z}^d$.
\begin{lemma}\label{lem:lattice}
	The number of lattice points in the $d$-dimensional ball of radius $R$ under the $\ell_p$ norm (i.e., in $B^d_p(R)$)
	is bounded by $V^d_p(R+d^{1/p})$.
\end{lemma}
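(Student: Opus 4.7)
The plan is to use a standard volumetric cube-packing argument. To each lattice point $z \in \mathbb{Z}^d$ I would associate the closed axis-aligned unit cube $C_z := z + [-\tfrac{1}{2},\tfrac{1}{2}]^d$. These cubes have pairwise disjoint interiors and each has Lebesgue volume exactly $1$, so the number of lattice points lying in $B^d_p(R)$ equals the total volume of the union $\bigcup_z C_z$ taken over $z \in B^d_p(R) \cap \mathbb{Z}^d$. If I can show that this union is contained in a slightly inflated ball, I can bound the count by the volume of that ball, which is exactly the form of the claim.

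The main step is therefore a containment claim: for every $z \in B^d_p(R) \cap \mathbb{Z}^d$, we have $C_z \subseteq B^d_p(R + d^{1/p})$. Since $\ell_p$ is a genuine norm for $p \geq 1$, the triangle inequality yields $\|y\|_p \leq \|z\|_p + \|y - z\|_p \leq R + \|y - z\|_p$ for every $y \in C_z$. For the residual $y - z \in [-\tfrac{1}{2},\tfrac{1}{2}]^d$, a direct estimate gives $\|y - z\|_p \leq (d\cdot (1/2)^p)^{1/p} = d^{1/p}/2 \leq d^{1/p}$, which proves the containment (in fact with a factor-$2$ slack).

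Finally I would combine the two observations: by additivity of Lebesgue measure on the disjoint-interior cubes $C_z$, and by containment in $B^d_p(R + d^{1/p})$, the count is bounded by $V^d_p(R + d^{1/p})$, as stated. I do not expect any real obstacle; the only thing to be careful about is to invoke the triangle inequality only where it is valid, i.e., for $p \geq 1$, which matches the regime of interest for the paper ($p=1$ for DTW, $p=\infty$ for DFD, and $p \geq 1$ generally). The fact that the lemma uses the looser $d^{1/p}$ rather than the tighter $d^{1/p}/2$ suggests the authors opted for a clean constant rather than an optimal one.
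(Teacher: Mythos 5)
Your proof is correct and follows essentially the same volumetric cube-packing argument as the paper: the only cosmetic difference is that you center the unit cubes at the lattice points (gaining a factor-$2$ slack via $d^{1/p}/2$), whereas the paper uses the corner-anchored cubes $[z_1,z_1+1]\times\dots\times[z_d,z_d+1]$ and bounds their $\ell_p$-diameter by $d^{1/p}$. Both yield the stated bound $V^d_p(R+d^{1/p})$.
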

\begin{proof}
	With each lattice point $z=(z_1,z_2,\dots,z_d)$, $z_i\in\integers$, we match the $d$-dimensional lattice cube $C(z)=[z_1,z_1+1]\times[z_2,z_2+1]\times\dots\times[z_d,z_d+1]$. Notice that $z\in C(z)$, and the $\ell_p$-diameter of a lattice cube is $d^{1/p}$. Therefore, the number of lattice points in the $\ell^d_p$-ball of radius $R$ is bounded by the number of lattice cubes that are contained in a $\ell^d_p$-ball with radius $R+d^{1/p}$. This number is bounded by $V^d_p(R+d^{1/p})$ divided by the volume of a lattice cube, which is $1^d=1$.
\end{proof}

\begin{remark}
	In general, in all our data structures we
	do not assume any bound on the dimension $d$. However, using dimension reduction techniques, we may assume that $d\le O(\frac{\log (nm)}{\eps^2})$.
 See \Cref{appendix:DimensionReduction} for details.
\end{remark}

\section{Discrete \frechet\ distance (DFD)}\label{sec:frechet}
Consider the infinite $d$-dimensional grid with edge length $\frac{\eps r}{\sqrt{d}}$.
Given a point $x$ in $\R^d$, by rounding one can find in $O(d)$ time the grid point $x'$ closest to $x$, and $\left\Vert x-x'\right\Vert_{2}\le\frac{\eps r}{2}$. Let $G(x,R)$ denote the set of grid points that are contained in $B^d_2(x,R)$.

\begin{corollary}\label{cor:gridDFD}
	$|G(x,(1+\eps)r)|=O(\frac{1}{\eps})^{d}$.
\end{corollary}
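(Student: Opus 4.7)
The plan is to scale the grid to the integer lattice and then apply \Cref{lem:lattice} with $p=2$. Specifically, scale by $\frac{\sqrt{d}}{\eps r}$: the grid becomes $\integers^d$ (possibly after a translation, which does not affect the count), and $B^d_2(x,(1+\eps)r)$ becomes a Euclidean ball of radius $R := \frac{(1+\eps)\sqrt{d}}{\eps}$. Thus $|G(x,(1+\eps)r)|$ equals the number of lattice points in a $d$-dimensional Euclidean ball of radius $R$.

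Next I would invoke (the translation-invariant version of) \Cref{lem:lattice}, which by the same cube-packing argument yields that the number of lattice points inside \emph{any} Euclidean ball of radius $R$ is at most $V^d_2(R+\sqrt{d})$. Plugging in $R+\sqrt{d} = \sqrt{d}\bigl(\tfrac{1+\eps}{\eps}+1\bigr) = \sqrt{d}\cdot\tfrac{1+2\eps}{\eps}$ and using the formula $V^d_2(\rho)=\frac{\pi^{d/2}}{\Gamma(1+d/2)}\rho^d$, I obtain
\[
|G(x,(1+\eps)r)| \;\le\; \frac{\pi^{d/2}}{\Gamma(1+d/2)}\cdot d^{d/2}\cdot\Bigl(\frac{1+2\eps}{\eps}\Bigr)^{d}.
\]

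Finally, I would clean this up with a standard Stirling-type estimate $\Gamma(1+d/2)\ge (d/(2e))^{d/2}$, giving $\frac{\pi^{d/2}d^{d/2}}{\Gamma(1+d/2)}\le (2\pi e)^{d/2}$, which is $O(1)^d$. Combined with $\frac{1+2\eps}{\eps}\le\frac{3}{\eps}$ for $\eps\le 1$, this yields $|G(x,(1+\eps)r)| \le (3\sqrt{2\pi e})^d\cdot\eps^{-d}=O(1/\eps)^d$, as claimed.

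There is no real obstacle here: the statement is essentially a volume-packing calculation, and the only mildly subtle points are (i) noting that \Cref{lem:lattice} is insensitive to the center of the ball (the cube-association proof works verbatim for any center), and (ii) controlling the $d$-dependent factor $\pi^{d/2}d^{d/2}/\Gamma(1+d/2)$ via Stirling so that it is absorbed into the $O(\cdot)$ on the base of the exponential.
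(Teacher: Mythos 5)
Your proof is correct and follows essentially the same route as the paper's: scale the grid to the integer lattice, apply \Cref{lem:lattice} to bound the lattice-point count by $V^d_2(O(\sqrt{d}/\eps))$, and absorb the $\pi^{d/2}d^{d/2}/\Gamma(1+d/2)$ factor into $O(1)^d$ via Stirling. Your explicit remark that the cube-association argument in \Cref{lem:lattice} is insensitive to the ball's center is a point the paper leaves implicit, but otherwise the two arguments coincide.
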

\begin{proof}
	We scale our grid so that the edge length is 1, hence we are looking for the number of lattice points in $B^d_2(x,\frac{1+\eps}{\eps}\sqrt{d})$. By~\Cref{lem:lattice} we get that this number is bounded by the volume of the $d$-dimensional ball of radius $\frac{1+\eps}{\eps}\sqrt{d}+\sqrt{d}\le\frac{3\sqrt{d}}{\eps}$.
	Using Stirling's formula we conclude that
	\[
	V^d_2\left(\frac{3\sqrt{d}}{\eps}\right)
	= \frac{\pi^{\frac{d}{2}}}{\Gamma(\frac{d}{2}+1)}\cdot\left(\frac{3\sqrt{d}}{\eps}\right)^{d}
	\le \left(\frac{\alpha}{\eps}\right)^{d}\, ,
	\]
	where $\alpha$ is a constant.
	For example, if $d$ is even, then
	\[
	V^d_2\left(\frac{3\sqrt{d}}{\eps}\right)
	= \frac{\pi^{\frac{d}{2}}}{(\frac{d}{2})!}\cdot\left(\frac{3\sqrt{d}}{\eps}\right)^{d} 
	\le \frac{\pi^{\frac{d}{2}}}{\sqrt{2\pi}(d/2)^{d/2 + 1/2} e^{-d/2}}\cdot\left(\frac{3\sqrt{d}}{\eps}\right)^{d}
	\le \left(\frac{12.4}{\eps}\right)^{d} = O\left(\frac{1}{\eps}\right)^d\, .
	\]
\end{proof}

Denote by $p_j^i$ the $j$'th point of $C_i$, and let $G_i=\bigcup_{1\le j\le m} G(p_j^i,(1+\eps)r)$ and $\G=\bigcup_{1\le i\le n} G_i$, then by the above corollary we have $|G_i|=m\cdot O(\frac{1}{\eps})^{d}$ and $|\G|=mn\cdot O(\frac{1}{\eps})^{d}$. 
Let $\I_i$ be the set of all curves $\overline{Q}=(x_1,x_2,\dots,x_m)$ with points from $G_i$, such that $\dfd(C_i,\overline{Q})\le (1+\frac{\eps}{2})r$.
\begin{claim}\label{clm:curvesDFD}
	$|\I_i|=O(\frac{1}{\eps})^{md}$ and it can be computed in $O(\frac{1}{\eps})^{md}$ time.
\end{claim}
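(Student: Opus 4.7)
The plan is to prove the bound by summing, over all possible alignments between $C_i$ and a candidate curve $\overline{Q}$, the number of curves compatible with each fixed alignment. Concretely, for any $\overline{Q}\in\I_i$ the inequality $\dfd(C_i,\overline{Q})\le(1+\tfrac{\eps}{2})r$ is witnessed by some non-redundant alignment $\tau$ with $\sigma_{dF}(\tau(C_i,\overline{Q}))\le(1+\tfrac{\eps}{2})r$, so by \Cref{lem:alignments} we only need to sum over $O(2^{2m}/\sqrt{m})$ alignments, rather than over all $m^m$-many ways to pick a vertex partner from $C_i$ for each $x_j$.

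For a fixed alignment $\tau$, each index $j\in[m]$ of $\overline{Q}$ appears in at least one pair of $\tau$, so there is some $k_j\in[m]$ with $(k_j,j)\in\tau$. Because DFD is a min--max measure, the constraint $\sigma_{dF}(\tau(C_i,\overline{Q}))\le(1+\tfrac{\eps}{2})r$ decouples across $j$ and forces $\|x_j-p_{k_j}^i\|_2\le(1+\tfrac{\eps}{2})r$, so $x_j\in G(p_{k_j}^i,(1+\tfrac{\eps}{2})r)$. By \Cref{cor:gridDFD} this set has at most $O(1/\eps)^d$ points. Multiplying the independent per-coordinate choices gives at most $O(1/\eps)^{md}$ curves compatible with a single $\tau$. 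Summing over alignments yields $|\I_i|\le O(2^{2m}/\sqrt{m})\cdot O(1/\eps)^{md}$; since $d\ge 1$ and $\eps\le 1$, the $4^m$ factor is absorbed using $4^m(c/\eps)^{md}\le(4c/\eps)^{md}$, leaving $|\I_i|=O(1/\eps)^{md}$ as claimed.

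For the construction I would enumerate every non-redundant alignment (straightforward from the recursive structure in the proof of \Cref{lem:alignments}) and, for each alignment $\tau$ and each index $j$, pick one $k_j$ with $(k_j,j)\in\tau$, iterate over the $O(1/\eps)^d$ grid points of $G(p_{k_j}^i,(1+\tfrac{\eps}{2})r)$, and keep only those $x_j$ satisfying $\|x_j-p_k^i\|_2\le(1+\tfrac{\eps}{2})r$ for every other $k$ matched to $j$. Every resulting tuple $(x_1,\dots,x_m)$ is inserted into a hash set to deduplicate. The work per alignment is $O(1/\eps)^{md}$, so the total running time is $O(4^m)\cdot O(1/\eps)^{md}=O(1/\eps)^{md}$.

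The main obstacle, and the key technical point, is the $m^m$ barrier that appears when one tries to bound $|\I_i|$ by treating each vertex $x_j$ as free to range over all of $G_i$: that naive argument gives only $|G_i|^m=m^m\cdot O(1/\eps)^{md}$. The crucial observation is that \emph{fixing} an alignment pins each $x_j$ to a single ball of radius $(1+\tfrac{\eps}{2})r$ around one prescribed $p_{k_j}^i$, reducing the per-vertex count from $m\cdot O(1/\eps)^d$ to $O(1/\eps)^d$. Paying only the $4^m$ multiplicative factor for the alignment, rather than $m^m$, is what turns the naive bound into the desired one and is precisely the exponential improvement promised in the introduction.
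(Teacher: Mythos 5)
Your proposal is correct and follows essentially the same argument as the paper: fix a witness alignment $\tau$, observe that the min--max constraint pins each $x_j$ to a single ball $G(p_{k_j}^i,(1+\tfrac{\eps}{2})r)$ of size $O(1/\eps)^d$, multiply to get $O(1/\eps)^{md}$ per alignment, and absorb the $O(2^{2m}/\sqrt{m})$ alignment count from \Cref{lem:alignments} into the $O(1/\eps)^{md}$ term. The only (immaterial) differences are that the paper takes $j_k$ to be the smallest matched index rather than an arbitrary one, and constructs $\I_i$ by computing $\dfd(C_i,\cdot)$ for each of the $O(1/\eps)^{md}$ candidates in $O(m^2)$ time each rather than by your per-alignment filtering, both yielding the same bounds.
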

\begin{proof}
	Let $\overline{Q}\in \I_i$ and let $\tau$ be an alignment with $\sigma_{dF}(\tau(C_i,\overline{Q}))\le (1+\frac{\eps}{2})r$. For each $1\le k\le m$ let $j_k$ be the smallest index such that $(j_k,k)\in\tau$. In other words, $j_k$ is the smallest index that is matched to $k$ by the alignment $\tau$. Since $\dfd(C_i,\overline{Q})\le (1+\frac{\eps}{2})r$, we have $x_k\in B^d_2(p^i_{j_k},(1+\frac{\eps}{2})r)$, for $k=1,\ldots,m$. 
	This means that for any curve $\overline{Q}\in \I_i$ such that $\sigma_{dF}(\tau(C_i,\overline{Q}))\le (1+\frac{\eps}{2})r$, we have $x_k \in G(p^i_{j_k},(1+\frac{\eps}{2})r)$, for $k=1,\ldots,m$.
	By \Cref{cor:gridDFD}, the number of ways to choose a grid point $x_k$ from $G(p^i_{j_k},(1+\frac{\eps}{2})r)$ is bounded by $O(\frac{1}{\eps})^{d}$. 
	
	We conclude that given an alignment $\tau$, the number of curves $\overline{Q}$ with $m$ points from $G_i$ such that $\sigma_{dF}(\tau(C_i,\overline{Q}))\le (1+\frac{\eps}{2})r$ is bounded by $O(\frac{1}{\eps})^{md}$. Finally, by \Cref{lem:alignments}, the total number of curves in $\I_i$ is bounded by $2^{2m}\cdot O(\frac{1}{\eps})^{md}= O(\frac{1}{\eps})^{md}$. 
	
	To construct $\I_i$ we compute, for each of the $O(\frac{1}{\eps})^{md}$ candidates, its discrete \frechet~distance to $C_i$. Thus, we construct $\I_i$ in total time $O(\frac{1}{\eps})^{md} \cdot O(m^2)= O(\frac{1}{\eps})^{md}$. (The latter equality is true, since clearly $(\frac{\alpha}{\eps})^{md} \cdot O(m^2) \le (\frac{c\alpha}{\eps})^{md}$, i.e., $O(m^2) \le c^{md}$, where $\alpha$ is the constant from \Cref{cor:gridDFD} and $c > 1$ is a sufficiently large constant.)  	
\end{proof}

\paragraph{The data structure.} 
Denote $\I=\bigcup_{1\le i\le n} \I_i$, so $|\I|\le n\cdot O(\frac{1}{\eps})^{md}$ and we construct $\I$ in total time $n\cdot O(\frac{1}{\eps})^{md}$.
Next, we would like to store the set $\I$ in a dictionary (a hash table or a lookup table) $\D$, such that given a query curve $Q$, one can find $Q$ in $\D$ (if it exists) in $O(md)$ time. 
We use Cuckoo Hashing~\cite{PR04} to construct a (dynamic) dictionary of linear space, constant worst-case query and deletion time, and constant expected amortized insertion time.
We insert the curves of $\I$ into the dictionary $\D$ as follows. For each $1\le i\le n$ and curve $\overline{Q}\in \I_i$, if $\overline{Q}\notin \D$, insert $\overline{Q}$ into $\D$, and set $C(\overline{Q})\leftarrow C_i$.
The storage space required for $\D$ is $O(|\I|)$, and to construct it we perform $|\I|$ insertions and look-up operations which take in total $O(|\I|\cdot md)=O(|\I|)$ expected time.

\paragraph{A deterministic construction using a prefix tree.} 
Another way to implement the dictionary, which is also dynamic, simple, and does not require randomization at all, is using a binary search tree. Assuming that comparing two curves (given their binary representations) requires $O(md)$ time, the query time will be $O(md\log |\I|)=O((md)^2\log\frac{n}{\eps})$.

Since there is a relatively small number of possible vertices (all the vertices are points of the grid $\G$) we can improve the query time to $O(md\log(\frac{nmd}{\eps}))$ by using a prefix tree instead of a search tree. For details, see~\Cref{appendix:deterministic-construction}.

\paragraph{The query algorithm.}
Let $Q=(q_1,\dots,q_m)$ be the query curve. The query algorithm is as follows: For each $1\le k\le m$ find the grid point $q'_k$ (not necessarily from $\G$) closest to $q_k$. This can be done in $O(md)$ time by rounding. Then, search for the curve $Q'=(q'_1,\dots,q'_m)$ in the dictionary $\D$. If $Q'$ is in $\D$, return $C(Q')$, otherwise, return NO. The total query time is then $O(md)$.

\paragraph{Correctness.}
Consider a query curve $Q=(q_1,\dots,q_m)$. Assume that there exists a curve $C_i\in \C$ such that $\dfd(C_i,Q)\le r$. We show that the query algorithm returns a curve $C^*$ with $\dfd(C^*,Q)\le (1+\eps)r$.

Consider a point $q_k\in Q$. Denote by $q'_k\in \G$ the grid point closest to $q_k$, and let $Q'=(q'_1,\dots,q'_m)$.
We have $\left\Vert q_k-q'_k\right\Vert_{2}\le\frac{\eps r}{2}$, so $\dfd(Q,Q')\le \frac{\eps r}{2}$. By the triangle inequality, 
\[
\dfd(C_i,Q')\le\dfd(C_i,Q)+\dfd(Q,Q')\le r+\frac{\eps r}{2}=(1+\frac{\eps}{2})r,
\]
so $Q'$ is in $\I_i\subseteq \I$. This means that $\D$ contains $Q'$ with a curve $C(Q')\in \C$ such that $\dfd(C(Q'),Q')\le(1+\frac{\eps}{2})r$, and the query algorithm returns $C(Q')$.
Now, again by the triangle inequality,
\[
\dfd(C(Q'),Q)\le\dfd(C(Q'),Q')+\dfd(Q',Q)\le (1+\frac{\eps}{2})r+\frac{\eps r}{2}=(1+\eps)r.
\]

We obtain the following theorem.
\begin{theorem}\label{thm:DFD}
	There exists a data structure for the $(1+\eps,r)$-ANNC under DFD, with $n\cdot O(\frac{1}{\eps})^{md}$ space, $n\cdot O(\frac{1}{\eps})^{md}$ expected preprocessing time, and $O(md)$ query time.
\end{theorem}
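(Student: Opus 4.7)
The plan is to instantiate the bucketing strategy outlined just before the theorem, combining the counting result of \Cref{clm:curvesDFD} with a standard hashing dictionary and a triangle-inequality argument for correctness. First I would lay down the infinite $d$-dimensional grid of side length $\frac{\eps r}{\sqrt{d}}$ so that rounding any point of $\R^d$ to its nearest grid vertex moves it by at most $\frac{\eps r}{2}$ in the $\ell_2$ norm, and hence rounding all vertices of a curve $Q$ produces a curve $Q'$ with $\dfd(Q,Q')\le\frac{\eps r}{2}$. For every input curve $C_i$ I build the set $\I_i$ of all $m$-vertex curves over $G_i$ within DFD at most $(1+\frac{\eps}{2})r$ of $C_i$, and then form $\I=\bigcup_i \I_i$.

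Next I would invoke \Cref{clm:curvesDFD} directly: it tells me both $|\I_i|=O(\frac{1}{\eps})^{md}$ and that $\I_i$ can be enumerated in the same time bound, so summing over $i$ gives $|\I|\le n\cdot O(\frac{1}{\eps})^{md}$ and the same total construction time. I then store the curves of $\I$ in a dictionary $\D$, associating with each $\overline Q\in\I$ one witness input curve $C(\overline Q)\in\C$ that placed $\overline Q$ into some $\I_i$. Using Cuckoo Hashing~\cite{PR04} the dictionary has linear space and supports lookup in time proportional to the key length, which is $O(md)$ bits of grid coordinates; insertion is constant expected amortized, so the entire preprocessing takes $n\cdot O(\frac{1}{\eps})^{md}$ expected time and the space is $n\cdot O(\frac{1}{\eps})^{md}$ as claimed.

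For queries, given $Q=(q_1,\dots,q_m)$ I round each $q_k$ to its nearest grid vertex $q'_k$ in $O(d)$ time, assemble $Q'=(q'_1,\dots,q'_m)$, and probe $\D$ for $Q'$; if present I return $C(Q')$, otherwise NO. This takes $O(md)$ time total. For correctness, assume some $C_i\in\C$ satisfies $\dfd(C_i,Q)\le r$. Since DFD satisfies the triangle inequality and $\dfd(Q,Q')\le\frac{\eps r}{2}$, I get $\dfd(C_i,Q')\le (1+\frac{\eps}{2})r$, so $Q'\in\I_i\subseteq\I$ and hence $Q'$ is found in $\D$ with some associated $C(Q')$ satisfying $\dfd(C(Q'),Q')\le(1+\frac{\eps}{2})r$. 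One more application of the triangle inequality yields $\dfd(C(Q'),Q)\le(1+\eps)r$.

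The only substantive obstacle is the space/time bound on $|\I|$, but that is already handled by \Cref{clm:curvesDFD}; everything else is bookkeeping around a grid-rounding and a hash-table lookup. I would therefore keep the proof short, essentially reducing the theorem to an application of \Cref{clm:curvesDFD} together with the two triangle-inequality invocations above.
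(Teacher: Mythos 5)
Your proposal is correct and follows essentially the same route as the paper: the same grid of side length $\frac{\eps r}{\sqrt{d}}$, the same candidate sets $\I_i$ bounded via \Cref{clm:curvesDFD}, a Cuckoo-hashing dictionary keyed by rounded curves, and the identical two applications of the triangle inequality for correctness. Nothing is missing.
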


\begin{table}[h]
	\begin{center}
		\begin{tabular}{ | l | l | l | l | p{2.5cm} |}
			\hline
			$m$ & Reference & Space & Query & Approx. \\ \hline\hline
			
			\multirow{3}{*}{$\log n$} & \cite{DS17} & $O(n^{4d+1}\log n)$ & $\tilde{O}(n^{4d})$ & $d\sqrt{d}$ \\ \cline{2-5}
			
			& \cite{EP20} & 
			$n^{\Omega(d\log n)}$
			& $\tilde{O}(dn^4)$ & $1+\eps$  
			\\ \cline{2-5}
			
			& \Cref{thm:DFD} & $n^{O(d)}$ & $O(d\log n)$ & $1+\eps$ \\ \hline\hline
			
			\multirow{3}{*}{$O(1)$} & \cite{DS17} & $2^{O(d)}n\log n$ & $2^{O(d)}\cdot \log n$ & $d\sqrt{d}$ \\ \cline{2-5}
			
			& \cite{EP20} & $d^{O(d)}\tilde{O}(n)$ & $O(d\log n)$ & $1+\eps$  \\ \cline{2-5}
			
			& \Cref{thm:DFD} & $2^{O(d)}n$ & $O(d)$ & $1+\eps$ \\ \hline
		\end{tabular}
		\caption{\small Comparing our ANN data structure to previous structures, for a fixed $\eps$ (say $\eps=1/2$).\label{tab:fixM}}
		\vspace{-10pt}
	\end{center}\vspace{-10pt}
\end{table}

\section{The asymmetric setting under DFD}\label{sec:asymmetric-DFD}
In this section, we show how to easily adapt our data structure to the asymmetric setting, by using simplifications of length at most $k$ instead of the original input curves.

Bereg et al.~\cite{BJWYZ08} showed that given a curve $C$ consisting of $m$ points in 3D, and a parameter $r>0$, there is an algorithm that runs in $O(m\log m)$ time and returns a simplification $\Pi$ with minimum number of vertices such that $\dfd(C,\Pi)\le r$. Their algorithm generalizes to higher dimensions, using an approximation algorithm for the minimum enclosing ball problem (see Kumar et al.~\cite{KMY03}). In this section, we use the following generalization of their original approach (\cite{BJWYZ08}, Theorem 1). More details are given in~\Cref{appendix:simplification}.
\begin{lemma}\label{lem:FreshetSimplification}
	Let $C$ be a curve consisting of $m$ points in $\mathbb{R}^d$. Given parameters $k\le m$, $r>0$, and $\eps\in(0,1]$, there is an algorithm that runs in $O\left(\frac{d\cdot m\log m}{\eps}+m\cdot\poly\frac1\eps\right)$ time that either returns 
	a simplification $\Pi$ consisting of $k$ points such that $\dfd(C,\Pi)\le (1+\eps)r$, or declares that for every simplification $\Pi$ with $k$ points, it holds that $\dfd(C,\Pi)>r$.
\end{lemma}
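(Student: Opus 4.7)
My plan is to generalize to $\mathbb{R}^d$ the three-dimensional greedy algorithm of Bereg et al.~\cite{BJWYZ08}, substituting the exact minimum enclosing ball (MEB) subroutine used in the original with the $(1+\eps)$-approximate MEB algorithm of Kumar et al.~\cite{KMY03}, which runs in time $O\!\left(\frac{nd}{\eps}+\poly\frac1\eps\right)$ on $n$ input points. The algorithm is a single left-to-right sweep of $C=(p_1,\dots,p_m)$: starting from a pointer $i=1$, it repeatedly locates the largest $i'\ge i$ for which the consecutive block $p_i,\dots,p_{i'}$ admits an enclosing ball of radius at most $(1+\eps/2)r$, appends the approximate centre $v_j$ of that ball to $\Pi$, and advances $i\gets i'+1$. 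The procedure terminates either when all of $C$ has been consumed---in which case $\Pi$ is returned, padded to length $k$ by duplicating its final vertex if $l<k$---or, upon having produced more than $k$ blocks, declares that no $k$-vertex $r$-simplification exists.

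That $\dfd(C,\Pi)\le(1+\eps)r$ on acceptance is immediate: the alignment that matches each $v_j$ to every $p_t$ in its greedy block has per-pair distance at most $(1+\eps/2)r<(1+\eps)r$, and any duplicate padding vertices at the end can all be matched to $p_m$ without increasing the cost. Soundness of rejection rests on a standard exchange argument: if $0=s_0<s_1<\dots<s_k=m$ are the break points of any optimal $k$-vertex $r$-simplification, I will show by induction that the greedy break points $0=t_0<t_1<\dots<t_l$ satisfy $t_j\ge s_j$. The induction step combines the hypothesis $t_{j-1}\ge s_{j-1}$ with monotonicity of the MEB radius under set inclusion: the block $\{p_{t_{j-1}+1},\dots,p_{s_j}\}$ is contained in $\{p_{s_{j-1}+1},\dots,p_{s_j}\}$, whose true MEB radius is at most $r$, and so its $(1+\eps/4)$-approximate MEB has radius at most $(1+\eps/4)r\le(1+\eps/2)r$, forcing the greedy step to extend at least to $s_j$. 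Hence if more than $k$ blocks are produced, no $k$-vertex $r$-simplification can exist.

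The implementation question is how to locate the largest valid $i'$ within each greedy step without paying a full MEB computation for every candidate endpoint. The plan is to run an exponential (doubling) search on the block length---invoking the $(1+\eps/4)$-approximate MEB subroutine on the prefixes $p_i,\dots,p_{i+2^q-1}$ until the returned radius first exceeds $(1+\eps/2)r$---and then binary-search within the last doubling interval. For a step that ultimately consumes $\ell$ points, the doubling phase contributes $O(\ell d/\eps+\log\ell\cdot\poly(1/\eps))$ work (the $2^q d/\eps$ terms form a geometric series) and the binary phase adds an $O(\log\ell)$ factor on the $d/\eps$ term while performing $O(\log\ell)$ further calls. Summing across the at most $k$ greedy rounds and using $\sum_j\ell_j\le m$ yields the claimed $O\!\left(\frac{dm\log m}{\eps}+m\cdot\poly\frac1\eps\right)$ running time.

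The main technical difficulty I anticipate is the approximation-factor bookkeeping across the iterated approximate MEB calls: one must ensure that the gap between the greedy acceptance threshold (used to certify that a block \emph{is} coverable at the relaxed radius) and the contradiction threshold (used to certify that a block \emph{cannot} be coverable at radius $r$) never collapses, while still yielding the clean final factor $(1+\eps)r$ promised in the lemma. The conservative pairing of internal MEB approximation factor $(1+\eps/4)$ with acceptance threshold $(1+\eps/2)r$ preserves enough slack to guarantee that any block feasible for the optimal $r$-simplification remains greedy-accepted, while any block on which greedy stops is truly infeasible at radius $r$, closing the exchange argument cleanly.
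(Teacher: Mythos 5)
Your proposal is correct and follows essentially the same route as the paper: generalize the greedy prefix-covering algorithm of Bereg et al.\ to $\mathbb{R}^d$ by replacing exact MEB with the approximate MEB of Kumar et al., locate each greedy breakpoint by doubling followed by binary search so that the accepted prefix is coverable at the relaxed radius while the next prefix is certifiably not coverable at radius $r$, and close the optimality claim with the standard exchange/induction argument; the running-time accounting via the geometric series and $\sum_j \ell_j \le m$ is also the paper's. The only (immaterial) difference is that you simulate the decision subroutine with a $(1+\eps/4)$-approximate optimizer and a $(1+\eps/2)r$ acceptance threshold, whereas the paper invokes the decision version of Kumar et al.\ directly with the $(1+\eps)r$ versus $r$ gap.
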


For each $C_i\in\C$, using \Cref{lem:FreshetSimplification} with parameter $\eps=1$, we find a curve $\Pi_i$ of length $k$ such that $\dfd(C_i,\Pi_i)\le 2r$.
If we fail to find such a curve, then we can ignore $C_i$, because it means that $\dfd(Q,C_i)> r$ for any curve $Q$ of length $k$.

To reduce the space consumption of our data structure, we only store candidate curves of length $k$ that are close enough to the simplifications $\Pi_i$. However, since the distance between the simplification $\Pi_i$ and the input curve $C_i$ could be up to $2r$, storing the answers for the set of candidate curves that are within distance $(1+\frac{\eps}{2})r$ from $\Pi_i$ is not enough, because a query $Q$ that is within distance $(1+\eps)r$ from $C_i$ might be as far as $(3+\eps)r$ from $\Pi_i$. Thus, instead, we insert into our data structure all the curves that are within distance $4r$ from $\Pi_i$. This allows us to capture all query curves that are within distance $r$ from $C_i$.

\paragraph{The data structure.} 
We construct our data structure for the original (symmetric) version, with the following modifications. The set of input curves is $\P=\{\Pi_1,\dots,\Pi_n\}$ (instead of $\C$), and the radius parameter is $4r$ (instead of $r$), but the grid edge length remains $\frac{\eps r}{\sqrt{d}}$. 
In addition, we let $\I'_i$ be the set of all curves $\overline{Q}$ with $k$ points from $G_i$, such that $\dfd(\overline{Q},\Pi_i)\le 4r$, and $\I_i$ will be the set of all curves $\overline{Q}\in\I'_i$ such that $\dfd(\overline{Q},C_i)\le (1+\frac{\eps}{2})r$.
We insert the curves in $\I_i$ into the database $\D$ as before: For each $\overline{Q} \in \I_i$, if $\overline{Q} \notin \D$, insert $\overline{Q}$ into $\D$ and set $C(\overline{Q}) \leftarrow C_i$.

Notice that using $4r$ instead of $r$, increases the ratio between the radius and the grid edge length by only a factor of 4, and therefore the bound on $|\I'_i|$ does not change, except that $m$ is replaced by $k$. 
Therefore, the bounds on the storage space and query time are similar to those of the original data structure, where $m$ is replaced by $k$. Thus, the storage space is in $n \cdot O(\frac{1}{\eps})^{kd}$ and the query time is in $O(kd)$. As for the preprocessing time, we get an additional term of $O(nmd\log m)$ for computing the simplifications $\Pi_1,\ldots,\Pi_n$. We also need to compute the distances $\dfd(C_i,\overline{Q})$ in the construction of $\I_i$, for $1 \le i \le n$, which takes $n \cdot O(\frac{1}{\eps})^{kd} \cdot O(mkd)=nm \cdot O(\frac{1}{\eps})^{kd}$ time in total (as $kd\le 2^{kd}$). Thus the total expected preprocessing time is $O(nmd\log m)+nm\cdot O(\frac{1}{\eps})^{kd}=nm\cdot\left(O(d\log m)+O(\frac{1}{\eps})^{kd}\right)$.

\paragraph{Correctness.} Consider a query curve $Q$, and assume that there exists a curve $C_i \in \C$ such that $\dfd(C_i,Q) \le r$. Then, $\Pi_i$ is a curve of length $k$ and $\dfd(C_i,\Pi_i) \le 2r$.
As in the previous section, let $Q'$ be the curve computed by the query algorithm, then $\dfd(Q',Q) \le \frac{\eps r}{2}$. By the triangle inequality, we have $\dfd(Q',C_i)\le\dfd(Q',Q)+\dfd(Q,C_i)\le(1+\frac{\eps}{2})r$, and
\[
\dfd(Q',\Pi_i)\le\dfd(Q',C_i)+\dfd(C_i,\Pi_i) \le (1+\frac{\eps}{2})r+2r\le 4r.
\]
Therefore our data structure contains $Q'$, and the query algorithm returns $C(Q')$, where
$\dfd(C(Q'),Q')\le(1+\frac{\eps}{2})r$. 
Finally, again by the triangle inequality, we have
\[
\dfd(C(Q'),Q)\le\dfd(C(Q'),Q')+\dfd(Q',Q)\le (1+\frac{\eps}{2})r+\frac{\eps r}{2}=(1+\eps)r.
\]

We obtain the following theorem.
\begin{theorem}\label{thm:DFDassym}
	There exists a data structure for the asymmetric $(1+\eps,r)$-ANNC under DFD, with $n\cdot O(\frac{1}{\eps})^{dk}$ space, $nm\cdot \left(O(d\log m)+O(\frac{1}{\eps})^{kd}\right)$ expected preprocessing time, and $O(kd)$ query time.
\end{theorem}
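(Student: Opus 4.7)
The plan is to reduce the asymmetric setting to (a slight variant of) the symmetric setting of \Cref{thm:DFD}, by replacing each input curve $C_i$ of length $m$ with a simplification $\Pi_i$ of length $k$. Concretely, I would first invoke \Cref{lem:FreshetSimplification} with $\eps=1$ on every $C_i$ to either obtain a $k$-vertex curve $\Pi_i$ with $\dfd(C_i,\Pi_i)\le 2r$, or a certificate that no $k$-vertex simplification lies within distance $r$ of $C_i$; in the latter case $C_i$ can never be within distance $r$ of any $k$-vertex query and may be discarded. This preprocessing step costs $O(nmd\log m)$ (absorbing the $\poly(1/\eps)$ factor into the $O(\frac{1}{\eps})^{kd}$ term below), and produces a set $\P=\{\Pi_1,\dots,\Pi_n\}$ of $k$-vertex curves.

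Next, I would construct the discretization exactly as in \Cref{sec:frechet}, using the same grid of edge length $\tfrac{\eps r}{\sqrt d}$, but now placing grid neighborhoods around vertices of each $\Pi_i$, and enlarging the radius from $r$ to $4r$. Letting $\I'_i$ denote the set of $k$-vertex grid curves within DFD $4r$ of $\Pi_i$, the counting argument behind \Cref{clm:curvesDFD} goes through verbatim once $m$ is replaced by $k$, because increasing the radius by a constant factor of $4$ merely increases the constant in the $O(\frac1\eps)^d$ bound of \Cref{cor:gridDFD}. I then filter: $\I_i:=\{\overline Q\in\I'_i : \dfd(\overline Q,C_i)\le(1+\tfrac\eps2)r\}$, insert each $\overline Q\in\I_i$ into a Cuckoo hash dictionary $\D$, tagging it with $C_i$. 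The query algorithm simply rounds the $k$ vertices of $Q$ to the grid and looks the resulting curve up in $\D$, giving the claimed $O(kd)$ query time, and the storage is bounded by $|\I|\le n\cdot O(\frac{1}{\eps})^{kd}$. The preprocessing time is dominated by evaluating $\dfd(C_i,\overline Q)$ for each candidate, which costs $O(mkd)$ per candidate and $nm\cdot O(\frac{1}{\eps})^{kd}$ overall (absorbing the $kd$ factor into the exponential), plus the $O(nmd\log m)$ for simplification.

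For correctness, suppose $\dfd(C_i,Q)\le r$ for some $i$. Rounding yields $Q'$ with $\dfd(Q,Q')\le\tfrac{\eps r}{2}$, so by the triangle inequality (which does hold for DFD), $\dfd(Q',C_i)\le(1+\tfrac\eps2)r$ and hence $\dfd(Q',\Pi_i)\le(1+\tfrac\eps2)r+2r\le 4r$, so $Q'\in\I'_i$ and the filter condition $\dfd(Q',C_i)\le(1+\tfrac\eps2)r$ is satisfied; thus $Q'\in\I_i\subseteq\D$. Reading off $C(Q')$ and applying the triangle inequality once more gives $\dfd(C(Q'),Q)\le(1+\eps)r$.

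I expect the main subtlety (rather than difficulty) to be the choice of radius $4r$ together with the filter $\dfd(\overline Q,C_i)\le(1+\tfrac\eps2)r$: one must enlarge the search radius enough to be sure that every $Q'$ arising from a valid $Q$ is captured (hence $4r\ge(3+\eps)r$), while still filtering against the \emph{original} curves $C_i$, not the simplifications, to preserve the final $(1+\eps)r$ approximation guarantee without paying an additive error from the simplification. Once this radius/filter pair is fixed correctly, the rest of the argument is a direct transcription of the symmetric proof with $m$ replaced by $k$ in the counting and hashing steps.
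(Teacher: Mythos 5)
Your proposal is correct and follows essentially the same route as the paper's own proof: simplify each $C_i$ to a $k$-vertex $\Pi_i$ with $\dfd(C_i,\Pi_i)\le 2r$ via \Cref{lem:FreshetSimplification}, enumerate grid curves within $4r$ of $\Pi_i$ on the unchanged grid, filter by $\dfd(\overline Q,C_i)\le(1+\tfrac\eps2)r$ against the original curves, and conclude by the triangle inequality. The radius/filter pairing you highlight as the key subtlety is exactly the choice the paper makes, and your space, preprocessing, and query-time accounting matches the paper's.
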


\section{$\ell_{p,2}$-distance of polygonal curves}\label{sec:l_p2-distance}
For the near-neighbor problem under the $\ell_{p,2}$-distance, we use the same basic approach as in \Cref{sec:frechet}, but with two small modifications. The first is that we set the grid's edge length to $\frac{\eps r}{(2m)^{1/p}\sqrt{d}}$, and redefine $G(x,R)$, $G_i$, and $\G$, as in \Cref{sec:frechet} but with respect to the new edge length of our grid.
The second modification is that we redefine $\I_i$ to be the set of all curves $\overline{Q}=(x_1,x_2,\dots,x_m)$ with points from $\G$, such that $\dpt(C_i,\overline{Q})\le (1+\frac{\eps}{2})r$.

We assume without loss of generality from now and to the end of this section that $r=1$ (we can simply scale the entire space by $1/r$), so the grid's edge length is $\frac{\eps}{(2m)^{1/p}\sqrt{d}}$. 
The following corollary is respective to \Cref*{cor:gridDFD}.
\begin{corollary}\label{cor:grid}
	$|G(x,R)|=O\left(1+\frac{m^{1/p}}{\eps}R\right)^d$.
\end{corollary}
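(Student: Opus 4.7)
The plan is to follow exactly the template of the proof of \Cref{cor:gridDFD}, but with the new grid edge length $\frac{\eps}{(2m)^{1/p}\sqrt{d}}$ in place of $\frac{\eps}{\sqrt{d}}$, and a general radius $R$ in place of $(1+\eps)r$. First I would scale the ambient space by the inverse of the grid edge length, so that the grid becomes the integer lattice $\mathbb{Z}^d$ and the Euclidean ball $B_2^d(x,R)$ becomes $B_2^d(x',\rho)$ with
\[
\rho \;=\; R\cdot\frac{(2m)^{1/p}\sqrt{d}}{\eps}.
\]
Counting grid points in $B_2^d(x,R)$ is now the same as counting lattice points in $B_2^d(x',\rho)$.

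Next I would invoke \Cref{lem:lattice} (for $p=2$) to bound the number of lattice points in $B_2^d(x',\rho)$ by $V_2^d(\rho+\sqrt{d})$. Plugging in the formula
\[
V_2^d(R') \;=\; \frac{\pi^{d/2}}{\Gamma(1+d/2)}\,R'^{\,d}
\]
and using the Stirling estimate $\Gamma(1+d/2)\ge \sqrt{2\pi}\,(d/2)^{d/2+1/2}e^{-d/2}$ (the same computation that appears in the proof of \Cref{cor:gridDFD}), one obtains an inequality of the form $V_2^d(R') \le (cR'/\sqrt{d})^{d}$ for an absolute constant $c$.

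Finally, substituting $R' = \rho+\sqrt{d}$ yields
\[
V_2^d(\rho+\sqrt{d}) \;\le\; \left(c\cdot\frac{\rho+\sqrt{d}}{\sqrt{d}}\right)^{d}
\;=\; \left(c\cdot\Bigl(\frac{R(2m)^{1/p}}{\eps}+1\Bigr)\right)^{d}
\;=\; O\!\left(1+\frac{m^{1/p}R}{\eps}\right)^{d},
\]
where we absorbed the $2^{1/p}\le 2$ factor into the hidden constant. This gives the claimed bound.

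There is no real obstacle here; the argument is a direct repetition of \Cref{cor:gridDFD} with bookkeeping for the finer grid and a symbolic radius. The only mild subtlety is making sure the additive $\sqrt{d}$ coming from the diameter of a unit lattice cube in \Cref{lem:lattice} is correctly absorbed into the additive $+1$ inside the big-Oh (after dividing by $\sqrt{d}$), which explains why the bound is stated as $O(1+m^{1/p}R/\eps)^d$ rather than a pure $O(m^{1/p}R/\eps)^d$ — this matters when $R$ is tiny.
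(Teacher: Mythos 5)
Your proof is correct and follows exactly the paper's own argument: rescale so the grid is the integer lattice, apply \Cref{lem:lattice}, and bound the volume via Stirling's estimate for $\Gamma(1+d/2)$, absorbing the $(2)^{1/p}\le 2$ factor and the additive $\sqrt{d}$ into the constant. No differences worth noting.
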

\begin{proof}
	We scale our grid so that the edge length is 1, hence we are looking for the number of lattice points in $B^d_2(x,\frac{(2m)^{1/p}\sqrt{d}}{\eps}R)$. By~\Cref{lem:lattice} we get that this number is bounded by the volume of the $d$-dimensional ball of radius $(1+\frac{(2m)^{1/p}}{\eps}R)\sqrt{d}$.
	Using Stirling's formula we conclude,
	\[
	V_2^{d}\left(\left(1+\frac{(2m)^{1/p}}{\eps}R\right)\sqrt{d}\right) =\frac{\pi^{\frac{d}{2}}}{\Gamma(\frac{d}{2}+1)}\cdot\left(\left(1+\frac{(2m)^{1/p}}{\eps}R\right)\sqrt{d}\right)^{d}
	=\alpha^d\cdot\left(1+\frac{m^{1/p}}{\eps}R\right)^{d}
	\]	
	where $\alpha$ is a constant (approximately $4.13\cdot 2^{1/p}$).
\end{proof}

In the following claim we bound the size of $\I_i$, which, surprisingly, is independent of $p$.
\begin{claim}\label{clm:curves}
	$|\I_i|=O(\frac{1}{\eps})^{m(d+1)}$ and it can be computed in $O(\frac{1}{\eps})^{m(d+1)}$ time.
\end{claim}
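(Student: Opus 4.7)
The plan is to mirror the strategy of \Cref{clm:curvesDFD}: for each alignment $\tau$ that can determine the $\ell_{p,2}$-distance I will bound the number of curves $\overline{Q}$ with vertices in $G_i$ such that $\sigma_{p,2}(\tau(C_i,\overline{Q})) \le 1+\eps/2$ (assuming $r=1$ by scaling), and then multiply by the number of relevant alignments from \Cref{lem:alignments}. Fix $\tau$, and for each $1 \le k \le m$ let $j_k$ be the smallest index with $(j_k,k)\in \tau$, and set $a_k := \|p^i_{j_k} - x_k\|_2$. The pairs $(j_k,k)$ are $m$ distinct elements of $\tau$, so the cost $\sigma_{p,2}(\tau(C_i,\overline{Q}))^p$ has a partial sum of exactly $\sum_k a_k^p$, whence $\sum_{k=1}^m a_k^p \le (1+\eps/2)^p$.

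The key step is to discretize this shared $p$-power budget through integer energy levels $b_k := \lceil 2m\,a_k^p/(1+\eps/2)^p\rceil$, which yields $a_k \le (b_k/(2m))^{1/p}(1+\eps/2)$ and $\sum_k b_k \le 2m + m = 3m$. Once the $b_k$'s are fixed, $x_k$ must lie in $B^d_2(p^i_{j_k}, a_k)$, so by \Cref{cor:grid} the number of eligible grid points is at most
\[
O\bigl(1 + m^{1/p}a_k/\eps\bigr)^d \;\le\; c_0^d\,(1 + b_k^{1/p}/\eps)^d \;\le\; c_0^d\,(1 + b_k/\eps)^d,
\]
for a constant $c_0$, using $b_k^{1/p} \le b_k$ for integer $b_k \ge 0$ and $p \ge 1$.

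Multiplying over $k$ and summing over energy tuples: since $\eps \le 1$ gives $1+b_k/\eps \le (1+b_k)/\eps$, AM-GM yields $\prod_k(1+b_k) \le \bigl(\tfrac{m+\sum_k b_k}{m}\bigr)^m \le 4^m$, hence each energy tuple accounts for at most
\[
\prod_{k=1}^m c_0^d(1+b_k/\eps)^d \;\le\; c_0^{md}\,\eps^{-md}\,4^{md} \;=\; (4c_0/\eps)^{md}
\]
candidate curves. The number of tuples with $\sum_k b_k \le 3m$ is $\binom{4m}{m} \le 16^m$, and \Cref{lem:alignments} gives at most $4^m$ alignments. Therefore
\[
|\I_i| \;\le\; 4^m \cdot 16^m \cdot (4c_0/\eps)^{md} \;\le\; O(1/\eps)^{m(d+1)},
\]
where I have absorbed $64^m$ into $O(1/\eps)^m$ via $\eps \le 1$. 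The construction of $\I_i$ enumerates all such candidates and for each verifies $\dpt(C_i,\overline{Q}) \le 1+\eps/2$ via the standard $O(m^2)$ dynamic program; the $O(m^2)$ factor is absorbed into the base of $(O(1/\eps))^{m(d+1)}$, keeping the total time $O(1/\eps)^{m(d+1)}$. The main obstacle, relative to DFD, is the coupling of the vertices through a single $p$-power budget; the $b_k$-discretization together with the AM-GM estimate is what cleanly converts this coupling into a per-vertex factor of $O(1/\eps)^{d+1}$.
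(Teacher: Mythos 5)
Your proof is correct and follows the same skeleton as the paper's: fix an alignment $\tau$ that can determine the distance, take for each $k$ the first index $j_k$ matched to $k$, discretize the per-vertex budget into an integer tuple, bound the product of the per-vertex grid counts via AM--GM together with \Cref{cor:grid}, and union-bound over alignments (\Cref{lem:alignments}) and tuples. The one substantive difference is what gets discretized: the paper rounds the distances $R_k=\|x_k-p^i_{j_k}\|_2$ themselves to $\alpha_k=\lceil m^{1/p}R_k/\eps\rceil$, so the tuple $(\alpha_1,\dots,\alpha_m)$ ranges over lattice points of an $\ell_p$-ball of radius $(2+\frac{1}{\eps})m^{1/p}$ and contributes a factor $O(\frac{1}{\eps})^m$ via \Cref{lem:lattice}, whereas you round the $p$-th powers $a_k^p$ at scale $\Theta(1/m)$, so your energy tuples number only $\binom{4m}{m}=O(1)^m$ and all of the $\eps$-dependence is pushed into the per-vertex grid counts. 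Both routes give the claimed $O(\frac{1}{\eps})^{m(d+1)}$; in fact your accounting yields the nominally sharper $O(1)^{m(d+1)}\cdot\eps^{-md}=O(\frac{1}{\eps})^{md}$ for $d\ge1$, matching the DFD exponent. The individual steps check out: $\sum_k b_k\le 3m$ follows from $\sum_k a_k^p\le(1+\eps/2)^p$ since the pairs $(j_k,k)$ are distinct elements of $\tau$, the bound $b_k^{1/p}\le b_k$ is valid for nonnegative integers and $p\ge1$, and $1+b_k/\eps\le(1+b_k)/\eps$ uses $\eps\le1$ exactly as the paper does implicitly.
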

\begin{proof}
	Let $\overline{Q}=(x_1,x_2,\dots,x_m)\in \I_i$, and let $\tau$ be an alignment with $\sigma_{p,2}(\tau(C_i,\overline{Q}))\le (1+\frac{\eps}{2})$. For each $1\le k\le m$ let $j_k$ be the smallest index such that $(j_k,k)\in\tau$. In other words, $j_k$ is the smallest index that is matched to $k$ by the alignment $\tau$. 
	
	Set $R_k=\|x_k-p_{j_k}^i\|_2$, then we have $\|(R_1,\dots,R_m)\|_p \le \sigma_{p,2}(\tau(C_i,\overline{Q}))\le (1+\frac{\eps}{2})$.
	
	Let $\alpha_k=\left\lceil \frac{m^{1/p}}{\eps}R_k\right\rceil$. By triangle inequality,
	\begin{align*}
	\|(\alpha_1,\alpha_2,\dots,\alpha_m)\|_p &\le \frac{m^{1/p}}{\eps}\|(R_1,R_2,\dots,R_m)\|_p +m^{1/p}\\
	&\le \frac{m^{1/p}}{\eps}\left(1+\frac{\eps}{2}\right)+m^{1/p}< \left(2+\frac{1}{\eps}\right)m^{1/p}.
	\end{align*}
	Clearly, $x_k\in B^d_2(p^i_{j_k},\alpha_k\frac{\eps}{m^{1/p}})$.
	
	We conclude that for each curve $\overline{Q}=(x_1,x_2,\dots,x_m)\in \I_i$ there exists an alignment $\tau$ such that $\sigma_{p,2}(\tau(C_i,\overline{Q}))\le 1+\frac{\eps}{2}$, and a sequence of integers $(\alpha_1,\dots,\alpha_m)$ such that $\|(\alpha_1,\alpha_2,\dots,\alpha_m)\|_p \le (2+\frac{1}{\eps})m^{1/p}$ and $x_k\in B^d_2(p^i_{j_k},\alpha_k\frac{\eps}{m^{1/p}})$, for $k=1,\ldots,m$. Therefore, the number of curves in $\I_i$ is bounded by the multiplication of three numbers:
	
	\begin{enumerate}
		\item The number of alignments that can determine the distance, which is at most $2^{2m}$ by \Cref{lem:alignments}.
		\item \sloppy The number of ways to choose a sequence of $m$ positive integers $\alpha_1,\dots,\alpha_m$ such that $\|(\alpha_1,\alpha_2,\dots,\alpha_m)\|_p \le (2+\frac{1}{\eps})m^{1/p}$, which is bounded by the number of lattice points in $B^m_p((2+\frac{1}{\eps})m^{1/p})$ (the $m$-dimensional $\ell_p$-ball of radius $(2+\frac{1}{\eps})m^{1/p}$). By \Cref{lem:lattice}, this number is bounded by  
		\[
		V^m_p((2+\frac{1}{\eps})m^{1/p}+m^{1/p})\le V^m_p(\frac{4m^{1/p}}{\eps})= \frac{2^m\Gamma(1+1/p)^{m}}{\Gamma(1+m/p)}\left( \frac{4m^{1/p}}{\eps}\right) ^m
		=O(\frac{1}{\eps})^{m}~,
		\] 
		where the last equality follows as $\frac{m^{m/p}}{\Gamma(1+m/p)}=O(1)^m$.
		\item The number of ways to choose a curve $(x_1,x_2,\ldots,x_m)$, such that $x_k  \in G(p^i_{j_k},\alpha_k\frac{\eps}{m^{1/p}})$, for $k=1,\ldots,m$.
		By \Cref{cor:grid}, the number of grid points in $G(p^i_{j_k},\alpha_k\frac{\eps}{m^{1/p}})$ is $O(1+\alpha_k)^d$, so the number of ways to choose $(x_1,x_2,\dots,x_m)$ is at most $\Pi_{k=1}^{m} O(1+\alpha_k)^d=O(1)^{md}\left( \Pi_{k=1}^{m} (1+\alpha_k)\right) ^d$.
		By the inequality of arithmetic and geometric means we have
		\begin{align*}
		\left(\Pi_{k=1}^{m} (1+\alpha_k)^p\right)^{1/p} & \le \left(\frac{\sum^{m}_{k=1}(1+\alpha_k)^p}{m} \right)^{m/p}\\
		& =  \left(\frac{\|(1+\alpha_1,\dots,1+\alpha_m)\|_p}{m^{1/p}}\right)^m \\ 
		& \le \left(\frac{\|1\|_p+\|(\alpha_1,\dots,\alpha_m)\|_p}{m^{1/p}}\right)^m \\
		& \le \left(\frac{m^{1/p}+(2+\frac{1}{\eps})m^{1/p}}{m^{1/p}}\right)^m = O(\frac{1}{\eps})^m,
		\end{align*}
		so $\Pi_{k=1}^{m} O(1+\alpha_k)^d = O(1)^{md}O(\frac{1}{\eps})^{md} = O(\frac{1}{\eps})^{md}$.
	\end{enumerate}
	Finally, $|\I_i|\le 2^{2m}\cdot O(\frac{1}{\eps})^m\cdot O(\frac{1}{\eps})^{md}\le O(\frac{1}{\eps})^{m(d+1)}$.
\end{proof}

The data structure and query algorithm are similar to those we described for DFD, and the size of $\I_i$ and $\I$ is roughly the same (here there is an additional $O(\frac{1}{\eps})^m$ factor in the space bound). Therefore, the query time, storage space, and preprocessing time are roughly similar, but we still need to show that the algorithm is correct.

\paragraph{Correctness.}
Consider a query curve $Q=(q_1,\dots,q_m)$. Assume that there exists a curve $C_i\in \C$ such that $\dpt(C_i,Q)\le 1$. We will show that the query algorithm returns a curve $C^*$ with $\dpt(C^*,Q)\le 1+\eps$.

Consider a point $q_k\in Q$. Denote by $q'_k\in \G$ the grid point closest to $q_k$, and let $Q'=(q'_1,\dots,q'_m)$.
We have $\|q_k-q'_k\|_{2}\le\frac{\eps}{2(2m)^{1/p}}$. 
Let $\tau$ be an alignment such that the $\ell_{p,2}$-cost of $\tau$ w.r.t. $C_i$ and $Q$ is at most $1$.
Unlike the \frechet\ distance, $\ell_{p,2}$-distance for curves does not satisfy the triangle inequality. However, by the triangle inequality under $\ell_2$ and $\ell_p$, we get that 
the $\ell_{p,2}$-cost of $\tau$ w.r.t. $C_i$ and $Q'$ is 
\begin{align*}
\sigma_{p,2}(\tau(C_i,Q'))&=\left(\sum_{(j,t)\in\tau}\|p^i_j-q'_t\|_2^p\right)^{1/p}\le\left(\sum_{(j,t)\in\tau}\left(\|p^i_j-q_t\|_2+\|q_t-q'_t\|_2\right)^p\right)^{1/p}\\
&\le\left(\sum_{(j,t)\in\tau}\|p^i_j-q_t\|_2^p\right)^{1/p}+\left(\sum_{(j,t)\in\tau}\|q_t-q'_t\|_2^p\right)^{1/p}\\
&\le 1+\left(2m\left(\frac{\eps}{2(2m)^{1/p}}\right)^p\right)^{1/p}\le 1+\frac{\eps}{2}.
\end{align*}

So $\dpt(C_i,Q')\le 1+\frac{\eps}{2}$, and thus $Q'$ is in $I_i\subseteq \I$. This means that $\T$ contains $Q'$ with a curve $C(Q')\in \C$ such that $\dpt(C(Q'),Q')\le 1+\frac{\eps}{2}$, and the query algorithm returns $C(Q')$.
Now, again by the same argument (using an alignment with $\ell_{p,2}$-cost at most $1+\frac{\eps}{2}$ w.r.t. $C(Q')$ and $Q'$), we get that $\dpt(C(Q'),Q)\le 1+\frac{\eps}{2}+\left(2m\left(\frac{\eps}{2(2m)^{1/p}}\right)^p\right)^{1/p}=1+\eps$.

We obtain the following theorem.
\begin{theorem}\label{thm:lp2}
	There exists a data structure for the $(1+\eps,r)$-ANNC under $\ell_{p,2}$-distance, with $n\cdot O(\frac{1}{\eps})^{m(d+1)}$ space, $n \cdot O(\frac{1}{\eps})^{m(d+1)}$ expected preprocessing time, and $O(md)$ query time.
\end{theorem}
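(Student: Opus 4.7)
The plan is to combine the finer discretization of Section~\ref{sec:l_p2-distance} with the data-structure scaffolding already used for DFD. After rescaling so that $r=1$, lay down the grid with edge length $\frac{\eps}{(2m)^{1/p}\sqrt{d}}$. For each input curve $C_i$, let $\I_i$ be the set of $m$-vertex curves whose vertices are grid points and whose $\ell_{p,2}$-distance to $C_i$ is at most $1+\eps/2$, and set $\I=\bigcup_i\I_i$. Insert each $\overline{Q}\in\I_i$ into a Cuckoo hash dictionary $\D$, associating it with a pointer to $C_i$ (skipping duplicates). On query $Q=(q_1,\dots,q_m)$, round each $q_k$ in $O(d)$ time to its nearest grid point $q'_k$, form $Q'=(q'_1,\dots,q'_m)$, look it up in $\D$, and return the associated input curve if $Q'\in\D$. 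The lookup takes $O(md)$ worst-case time.

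The quantitative bounds follow immediately from Claim~\ref{clm:curves}: $|\I_i|\le O(1/\eps)^{m(d+1)}$ yields $|\I|\le n\cdot O(1/\eps)^{m(d+1)}$, giving the stated space bound. For preprocessing, enumerating the $O(1/\eps)^{m(d+1)}$ candidates per curve and verifying each via an $O(m^2 d)$ dynamic program for $\dpt$ is absorbed into the same exponential factor, and the Cuckoo-hash insertions add only expected linear overhead.

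Correctness is where care is needed, because $\ell_{p,2}$ does \emph{not} satisfy a global triangle inequality; I would argue per-alignment instead. Let $\tau$ be an alignment with $\sigma_{p,2}(\tau(C_i,Q))\le 1$. Since $|\tau|\le 2m$ and $\|q_t-q'_t\|_2\le\frac{\eps}{2(2m)^{1/p}}$, the rounding-error terms along $\tau$ have $\ell_p$-norm at most $\eps/2$. Applying Minkowski's inequality to the pointwise bound $\|p^i_j-q'_t\|_2\le\|p^i_j-q_t\|_2+\|q_t-q'_t\|_2$ gives
\[
\sigma_{p,2}(\tau(C_i,Q'))\le\sigma_{p,2}(\tau(C_i,Q))+\left(\sum_{(j,t)\in\tau}\|q_t-q'_t\|_2^p\right)^{1/p}\le 1+\frac{\eps}{2},
\]
so $Q'\in\I_i$, and the lookup returns some $C(Q')$ with $\dpt(C(Q'),Q')\le 1+\eps/2$. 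A second application of the same argument, now to $Q'$ and $Q$, adds a further $\eps/2$ and yields $\dpt(C(Q'),Q)\le 1+\eps$.

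The main obstacle is the counting in Claim~\ref{clm:curves}: since $\ell_{p,2}$ is sum-based, the admissible per-vertex ball radii $R_k$ around $p^i_{j_k}$ are coupled through the global budget $\|(R_1,\dots,R_m)\|_p\le 1+\eps/2$, so one cannot just multiply per-vertex ball sizes. Discretizing the radii to integers $\alpha_k=\lceil m^{1/p}R_k/\eps\rceil$ confines $(\alpha_1,\dots,\alpha_m)$ to an $m$-dimensional $\ell_p$-ball of radius $O(m^{1/p}/\eps)$ (yielding $O(1/\eps)^m$ choices), and the product $\prod_k O(1+\alpha_k)^d$ is then tamed by AM--GM back to $O(1/\eps)^{md}$. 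That only one extra $O(1/\eps)^m$ factor is lost relative to the DFD bound is the technical win of the section.
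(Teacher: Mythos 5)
Your proposal is correct and follows essentially the same route as the paper: the same rescaled grid of edge length $\frac{\eps}{(2m)^{1/p}\sqrt{d}}$, the same candidate sets $\I_i$ bounded via Claim~\ref{clm:curves}, the same dictionary-based lookup, and the same per-alignment Minkowski argument (applied twice) to substitute for the missing triangle inequality. Your accounting of the rounding error as having $\ell_p$-norm at most $\eps/2$ over the at most $2m$ pairs of $\tau$ matches the paper's computation exactly.
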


As mentioned in the preliminaries section, the DTW distance between two curves equals to their $\ell_{1,2}$-distance, and therefore we obtain the following theorem.
\begin{theorem}\label{thm:DTW}
	There exists a data structure for the $(1+\eps,r)$-ANNC under DTW, with $n\cdot O(\frac{1}{\eps})^{m(d+1)}$ space, $n\cdot O(\frac{1}{\eps})^{m(d+1)}$ expected preprocessing time, and $O(md)$ query time.
\end{theorem}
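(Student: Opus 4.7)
The plan is to derive Theorem \ref{thm:DTW} as an immediate corollary of Theorem \ref{thm:lp2}. As noted in the preliminaries, DTW is precisely the $\ell_{p,2}$-distance at $p=1$: we have $\sigma_{DTW}=\sigma_{1,2}$ and $\dtw=d_{1,2}$. So I would simply instantiate Theorem \ref{thm:lp2} with $p=1$ and read off the stated space, preprocessing, and query bounds for DTW.

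The only diligence required is to check that no ingredient in the proof of Theorem \ref{thm:lp2} degrades at the endpoint $p=1$. In Claim \ref{clm:curves} the three factors bounding $|\I_i|$ are either $p$-independent (the alignment count from Lemma \ref{lem:alignments}) or controlled by ingredients that remain valid at $p=1$: the lattice-point bound in $B^m_1(O(m/\eps))$ uses Stirling via $\frac{m^m}{m!}=O(1)^m$; and the grid-point product bound applies AM--GM followed by the $\ell_p$ triangle inequality, both of which are standard at $p=1$. Hence $|\I_i|=O(1/\eps)^{m(d+1)}$ holds verbatim for DTW, and the space, preprocessing, and query bounds of Theorem \ref{thm:lp2} carry over without change.

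Likewise the correctness argument of Theorem \ref{thm:lp2} uses Minkowski's inequality $\sigma_{p,2}(\tau(C_i,Q'))\le \sigma_{p,2}(\tau(C_i,Q))+\sigma_{p,2}(\tau(Q,Q'))$, whose $p=1$ instance is trivial (it is just additivity of a sum of nonnegative reals), so the rounding-error bound $\dtw(Q,Q')\le \eps r/2$ follows immediately, and the triangle-inequality-style chain giving $\dtw(C(Q'),Q)\le (1+\eps)r$ goes through verbatim.

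The main obstacle to the whole argument lies inside Claim \ref{clm:curves}---taming the combinatorial explosion of admissible distance allocations $(\alpha_1,\dots,\alpha_m)$ via the AM--GM/Minkowski step so that $|\I_i|$ does not blow up super-exponentially in $m$. Once that work has already been done for general $p\ge 1$, Theorem \ref{thm:DTW} requires no additional effort beyond the $p=1$ substitution.
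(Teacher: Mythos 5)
Your proposal is correct and matches the paper's own derivation exactly: the paper obtains \Cref{thm:DTW} as the $p=1$ instantiation of \Cref{thm:lp2}, citing the identity $\dtw=d_{1,2}$ from the preliminaries. Your extra diligence in checking that \Cref{clm:curves} and the Minkowski-based correctness argument do not degrade at $p=1$ is sound but not something the paper spells out.
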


\section{The asymmetric setting under $\ell_{p,2}$-distance}\label{sec:assym-l_p2-distance}
In~\Cref{sec:asymmetric-DFD}, we strongly rely on the fact that DFD satisfies the triangle inequality, in order to provide a data structure with storage space independent of $m$. However, the general $\ell_{p,2}$ distance does not satisfy the triangle inequality, not even up to a constant factor. Lemire~\cite{Lemire09} proved the following weak version of the triangle inequality for $\ell_{p,2}$ distance, and showed it to be tight:
For any three curves $A,B,C$ of length $k$, 
it holds that $\dpt(A,B)\le k^{\nicefrac{1}{p}}\left(\dpt(A,C)+\dpt(C,B)\right)$.

Nonetheless, in~\Cref{sec:l_p2-distance} we showed how to apply our approach to ANNC under the general $\ell_{p,2}$-distance. This was possible due to the fact that we always match $Q'$ to $Q$ in a ``one-to-one'' alignment, a special case where the triangle inequality does hold. In our solution to the asymmetric case, we use the triangle inequality between $\Pi_i$, $C_i$, and $Q$, in which case we do not have a one-to-one matching. Therefore, our analysis for the asymmetric \frechet\ distance does not trivially apply to the $\ell_{p,2}$-distance in general.
In this section we show how we can still adapt our algorithm to the asymmetric ANNC under $\ell_{p,2}$ distance. The storage space and query time is almost the same as in \Cref{thm:lp2}, where we replace $m$ by $k$.

We begin by describing some important characterizations of curve alignments. 
First, notice that a curve alignment $\tau=\langle(i_1,j_1),\dotsc,(i_t,j_t)\rangle$ can be viewed as a bipartite graph on the vertices of the two curves. Moreover, we can assume that each connected component in this graph is a star graph, i.e., a single vertex from the first curve is connected to one or more vertices from the second curve, and vice versa. Indeed, as we claimed in the proof of~\Cref{lem:alignments}, if there exist pairs (or edges) $(i_k,j_k-1),(i_k,j_k),(i_k+1,j_k)$, then removing $(i_k,j_k)$ from $\tau$ results in a legal curve alignment with a smaller (or equal) cost. 
The following special type of curve alignment is crucial in the construction and proof of our algorithm.

\paragraph{One-way alignment.} An alignment $\tau = \langle(i_1,j_1),\dotsc,(i_t,j_t)\rangle$ is a \dfn{one-way alignment} if for any $1\le s\le t$, we have $i_s=s$. In other words, in the view of $\tau$ as a bipartite graph, we get a set of stars such that all the centers are in the second curve, and thus each index of the first curve appears in exactly one pair of $\tau$.

\begin{remark}
	A nice property of one-way alignments is that, while the triangle inequality does not apply for $\ell_{p,2}$ distance between curves in general, it does hold when the $\ell_{p,2}$ distances between the curves are obtained by one-way alignments. See \Cref{clm:oneWayTriangleEnq} in \Cref{appendix:one-way} for details.
\end{remark}

\subsection{Simplification under $\ell_{p,2}$-distance}
As for DFD, in order to construct a data structure with storage space independent of $m$, we will need to compute simplifications of length $k$ for the input curves. The points in a simplification can be arbitrary, but for the purpose of this section it is enough to use vertex-restricted simplifications. A simplification $\Pi$ of a curve $C$ is \emph{vertex-restricted} if the points of $\Pi$ are from the vertices of $C$, and follow the same ordering.

\begin{lemma}\label{lem:simp-approx}
	Let $C=(p_1,\dots,p_m)$ be a curve consisting of $m$ points in $\mathbb{R}^d$.
	Denote by $\overline{\Pi}=(x_1,\dots,x_k)$ the closest simplification to $C$ with $k$ points under the $\ell_{p,2}$-distance.
	Then there exist a vertex-restricted simplification $\Pi$ of $C$ with $k$ points such that $\dpt(C,\Pi)\le 2\cdot\dpt(C,\overline{\Pi})$.
\end{lemma}
\begin{proof}
	Let $\tau$ be an alignment such that $\dpt(C,\overline{\Pi})=\sigma_{p,2}(\tau(C,\overline{\Pi}))$. We can assume that $\tau$ is a one-way alignment, as otherwise we can remove points from $\overline{\Pi}$ without increasing the distance to $C$.
	For $1\le j \le k$, let $A_j=\{i\in[m]\mid (i,j)\in \tau\}$ be the set of indices matched to $j$ by $\tau$. Let $i_j=arg\min_{t\in A_j}\|p_{t}-x_j\|_2$. Note that for every $i\in A_j$ it holds that $\|p_{i}-p_{i_{j}}\|_{2}\le\|p_{i}-x_{j}\|_{2}+\|x_{j}-p_{i_{j}}\|_{2}\le2\|p_{i}-x_{j}\|_{2}$.
	Set $\Pi=(p_{i_1},\dots,p_{i_k})$, so $\Pi$ is a vertex-restricted simplification of $C$ consisting of $k$ points from $C$. It holds that
	\[
	\sigma_{p,2}(\tau(C,\overline{\Pi}))=
	\left(\sum_{j=1}^{k}\sum_{i\in A_{j}}\|p_{i}-x_{j}\|_{2}^{p}\right)^{1/p}\ge
	\left(\sum_{j=1}^{k}\sum_{i\in A_{j}}\frac{1}{2^{\phantom{.\hspace{-1pt}}^p}}\|p_{i}-p_{i_{j}}\|_{2}^{p}\right)^{1/p}=
	\frac{1}{2}\sigma_{p,2}(\tau(C,\Pi))~.
	\]
	The lemma follows as $\dpt(C,\Pi)\le \sigma_{p,2}(\tau(C,\Pi))\le 2\sigma_{p,2}(\tau(C,\overline{\Pi}))=2\dpt(C,\overline{\Pi})$.
\end{proof}
\begin{lemma}\label{lem:simp-compute}
	Given a curve $C=(p_1,\dots,p_m)$ consisting of $m$ points in $\mathbb{R}^d$, and parameters $k<m$, $p\ge 1$, there exists an algorithm that runs in $O(m^3k+m^2d)$ time and computes a vertex-restricted simplification $\Pi$ of $C$ with $k$ points, such that $\dpt(C,\Pi)$ is minimized.
\end{lemma}
\begin{proof}
	We show how to compute $\Pi$ using a dynamic programming technique. We begin by precomputing all the pairwise distances in $C$, such that for every $i,j$, we will have a constant time access to $\Vert p_i-p_j\Vert_2^p$. This takes $O(m^2d)$ time (ignoring the time it takes to compute the $p$-power of a number).
	
	We define $OPT[i,j,x]$ as follows.
	Let $C[1:i]=(p_1,\dots,p_i)$, and let $\Pi_j^x=(p_{i_1},\dots,p_{i_x})$ be a vertex-restricted simplification with $x$ points from $C[1:j]$, such that $i_x=j$, and $\dpt(C[1:i],\Pi_j^x)$ is minimized. Then $OPT[i,j,x]=\left(\dpt(C[1:i],\Pi_j^x)\right)^p$.
	We compute $OPT[i,j,x]$ for any $1\le x \le k$, $x\le j\le m$, and $x\le i \le m$ as follows.
	
	First, for any $1\le j\le m$ we have $OPT[1,j,1]=\|p_1-p_j\|_2^p$, and for any $2\le i \le m$ we have $OPT[i,j,1]=OPT[i-1,j,1]+\|p_i-p_j\|_2^p$. Thus $OPT[i,j,1]$ can be computed in $O(m^2)$ time for all $1\le i\le m$, $1\le j \le m$.
	
	Next, we compute $OPT[i,j,x]$ for all $2\le x \le k$, $x\le j\le m$, and $x\le i \le m$, using the following formula, in $O(m^3k)$ time:
	\[
	OPT[i,j,x]=\|p_i-p_j\|_2^p+\min \left\{ \min_{x-1\le j'< j}OPT[i-1,j',x-1],~ OPT[i-1,j,x] \right\}.
	\]
	Any alignment w.r.t.~$C[1:i]$ and a simplification $\Pi_j^x$ has to match $p_i$ and $p_j$. Let $\tau$ be an alignment such that $\sigma_{p,2}(\tau(C[1:i],\Pi_j^x))$ is minimized. If $\tau$ matches $p_j$ to $p_{i-1}$ then $OPT[i,j,x]=\|p_i-p_j\|_2^p + OPT[i-1,j,x]$. Otherwise, there exists some $x-1\le j'< j$ such that $\tau$ matches $p_{j'}$ to $p_{i-1}$, and $OPT[i,j,x]=\|p_i-p_j\|_2^p+OPT[i-1,j',x-1]$.
	
	Finally, we return 
	\[
	\min_{\substack{1\le x\le k\\ x\le j\le m}}OPT[m,j,x],
	\]
	which can be computed in $O(mk)$ time.
	
	Clearly, the vertex-restricted simplification $\Pi$ that minimizes $\dpt(C,\Pi)$, and the corresponding alignment, can be found by backtracking in $O(mk)$ time.
\end{proof}

\subsection{The data structure}\label{subsec:asym_lp}
Our data structure for the asymmetric case under $\ell_{p,2}$-distance is very similar to the asymmetric case under DFD, but, without the triangle inequality, we have to use more involved counting arguments in order to achieve similar bounds. 

Fix some $p>1$ and assume w.l.o.g., as in \Cref{sec:l_p2-distance}, that $r=1$.
For every input curve $C_{i}$, compute the optimal vertex-restricted simplification $\Pi_{i}$ of $C_i$, with at most $k$ points, using \Cref{lem:simp-compute} in $O(m^3 k+m^2d)$ time.
Note that in addition we obtain a one-way alignment $\tau_i$ for which $\dpt(C_i,\Pi_i)= \sigma_{p,2}(\tau_i(C_i,\Pi_i))$.
We can assume that $\dpt(C_i,\Pi_i)\le 2$, as otherwise we can just ignore $C_i$, since by \Cref{lem:simp-approx} we get that for every curve $Q$ of length $\le k$, $\dpt(C_i,Q)>1$. 

We again construct our data structure for the original problem (for $\ell_{p,2}$-distance), but with the following modifications. The set of input curves is $\P=\{\Pi_1,\dots,\Pi_n\}$, and the radius parameter is $7k^{\nicefrac{1}{p}}$ (the length of grid edges is still $\frac{\eps}{(2m)^{1/p}\sqrt{d}}$).
Now let $\I'_i$ be the set of all curves $\overline{Q}$ with $k$ points from $G_i$, such that $\dpt(\overline{Q},\Pi_i)\le 7k^{\nicefrac{1}{p}}$.
In addition, let $\I_i$ be the set of all curves $\overline{Q}\in\I'_i$ such that $\dpt(\overline{Q},C_i)\le 1+\frac{\eps}{2}$.

\paragraph{Correctness.} Consider a query curve $Q=(q_1,\dots,q_k)$ such that $\dpt(Q,C_i)\le 1$, and let $Q'=(q'_1,\dots,q'_k)$ be the grid curve that was computed by the query algorithm. 
Following the same arguments as in \Cref{sec:l_p2-distance}, there exists an alignment $\tau$ such that $\sigma_p(\tau(Q',C_i))\le 1+\frac{\eps}{2}$, and thus $\dpt(Q',C_i)\le 1+\frac{\eps}{2}$. 
We have that
\begin{align*}
\dpt(Q',\Pi_{i}) & \le(2k-1)^{\nicefrac{1}{p}}\cdot\dfd(Q',\Pi_{i})\\
& \le(2k-1)^{\nicefrac{1}{p}}\cdot\left(\dfd(Q',C_{i})+\dfd(C_{i},\Pi_{i})\right)\\
& \le(2k-1)^{\nicefrac{1}{p}}\cdot\left(\dpt(Q',C_{i})+\dpt(C_{i},\Pi_{i})\right)\\
& \le(2k-1)^{\nicefrac{1}{p}}\cdot\left(1+\frac{\eps}{2}+2\right)\cdot\dpt(Q,C_{i})\le7 k^{\nicefrac{1}{p}}~,
\end{align*}
where the first and third inequalities follow by the fact that for every $\vec{x}\in\mathbb{R}^d$, $ \Vert \vec{x}\Vert_\infty\le \Vert \vec{x}\Vert_p\le d^{\nicefrac1p}\cdot \Vert \vec{x}\Vert_\infty$, and the second inequality follows by the triangle inequality of discrete \frechet\ distance. It follows that
$Q'\in \I_i$. The query algorithm then returns a curve $C(Q')$ such that $\dpt(Q',C(Q'))\le 1+\frac{\eps}{2}$, and $\dpt(Q,C(Q'))\le 1+\eps$.

\paragraph{Storage space.} A calculation using arguments similar to those in \Cref{sec:l_p2-distance} (where $\I'_{i}$ is computed using the same algorithm, but with a radius multiplied by $7k^{\nicefrac{1}{p}}$), bounds the storage by $O(\frac{1}{\epsilon})^{k(d+1)}\cdot \left(k\cdot m^{d}\right)^{\frac{k}{p}}$, and the preprocessing time by $nm\cdot \left(O(m^2k+md))+O(\frac{1}{\epsilon})^{k(d+1)}\cdot \left(k\cdot m^{d}\right)^{\frac{k}{p}}\right)$ only (we leave this as an exercise for the diligent reader). Next, we provide a better counting argument and remove the $k^{\nicefrac{1}{p}}$ factor from the base of the exponent.

Fix an input curve $C_i=(p_1,\dots,p_m)$, 
a simplification $\Pi_i$ of length $k$, and a one-way alignment $\tau_i$ such that $\dpt(C_i,\Pi_i)= \sigma_{p,2}(\tau_i(C_i,\Pi_i))\le 2$. We say that a subset $A\subseteq C_i$ is \emph{consecutive} if it is of the form $(p_s,p_{s+1},\dots,p_t)$, for some $s,t\in[m]$. 

\begin{claim}\label{clm:curve-partition}
	The points of $C_i$ can be partitioned into $k'\le 3k$ consecutive subsets $A_1,\dots,A_{k'}$ such that for every $1\le s\le k'$ there exists a center point $y_s$ for which $A_s\subset B_2^d(y_{s},\frac{2}{k^{1/p}})$.
\end{claim}
\begin{proof}
	For any $p_j\in C_i$, denote by $\pi(p_j)$ the single point in $\Pi_i$ that is matched to $p_j$ by $\tau_i$. Let $L=\{p_j\in C_i \mid \Vert p_j-\pi(p_j)\Vert_2 >\frac{2}{k^{1/p}} \}$, then $|L|\le k$, as otherwise we have
	\[
	\sigma_{p,2}(\tau_{i}(C_{i},\Pi_{i}))=\left(\sum_{j=1}^{m}\Vert p_{j}-\pi(p_j)\Vert_{2}^{p}\right)^{1/p}\ge\left(\sum_{p_{j}\in L}\Vert p_{j}-\pi(p_j)\Vert_{2}^{p}\right)^{1/p}>\left(k\cdot\frac{2^{p}}{k}\right)^{1/p}=2~,
	\]
	which is a contradiction.
	
	We construct a set $\A$ of consecutive subsets $A_1,\dots,A_{k'}$ and the corresponding center points $y_1,\dots,y_{k'}$ as follows. 
	First, for every $1\le s'\le k$ let $\tilde{A}_{s'}=\{p_j\in C_i \mid (j,s')\in\tau_i\}$. Then $\tilde{A}_1,\dots,\tilde{A}_{k}$ is a set of consecutive subsets that partition $C_i$. For every set $\tilde{A}_{s'}$, let $\tilde{A}_{s'}\cap L=\{p_{i_1},\dots,p_{i_t}\}$, and insert into $\A$ the partition of $\tilde{A}_{s'}$ into $2t+1$ consecutive subsets $A^1_{s'},\{p_{i_1}\},A^2_{s'},\{p_{i_2}\},\dots,A^t_{s'},\{p_{i_t}\},A^{t+1}_{s'}$.
	The number of subsets in $\A$ is at most $2|L|+k\le 3k$.
	For each subset $A_s\in\A$, if it is a singleton $\{p_j\}$ then we set $y_s=p_j$, otherwise, $A_s=A^\ell_{s'}$ and we set $y_s$ to be the $s'$ point of $\Pi_i$. Clearly, $\Vert p_j-y_s\Vert_2\le \frac{2}{k^{1/p}}$ for every $p_j\in A_s$.
\end{proof}

For each curve $C_i\in\C$ we can compute in $O(m)$ time the set of at most $3k$ consecutive subsets $A_1,\dots,A_{k'}$ and centers $y_1,\dots,y_{k'}$ from~\Cref{clm:curve-partition}.

Now, consider a curve $\overline{Q}=(x_1,\dots,x_k)\in \I_i$, and let $\tau$ be an alignment such that $\sigma_{p,2}(\tau(C_i,\overline{Q}))\le 1+\frac\eps2$.
For every $1\le j\le k$, let $1\le c(j)\le m$ be the smallest index such that $(c(j),j)\in\tau$, and let $1\le f(j)\le k'$ be the index such that $p_{c(j)}\in A_{f(j)}$.
Since the sets $A_1,\dots,A_{k'}$ are consecutive and $\tau$ is an alignment, $f$ is a monotonically non-decreasing function from $[k]$ to $[k']$. 
It follows that the number of possible functions $f$ is equivalent to the number of ways to distribute $k$ identical balls into $k'$ distinct boxes, thus ${k'-1+k \choose k}< {4k \choose k}\le O(2)^{k}$.

Set $R_j=\|x_j-p_{c(j)}\|_2$ and $\alpha_j=\left\lceil \frac{k^{1/p}}{\eps}R_{j}\right\rceil $.
We have $\|(R_1,\dots,R_k)\|_p \le \sigma_{p,2}(\tau(C_i,\overline{Q}))\le 1+\frac\eps2$.
By the triangle inequality,
\begin{align*}
\|(\alpha_{1},\alpha_{2},\dots,\alpha_{k})\|_{p} & \le\Vert(\frac{k^{1/p}}{\eps}R_{1}+1,\dots,\frac{k^{1/p}}{\eps}R_{k}+1)\Vert_{p}\\
& \le\frac{k^{1/p}}{\eps}\|(R_{1},\dots,R_{k})\|_{p}+k^{1/p}\quad~~\le~\frac{k^{1/p}}{\eps}(1+\frac{3}{2}\eps)~.
\end{align*}
The number of ways to choose such a sequence $\alpha_1,\dots,\alpha_k$ of $k$ positive integers is bounded by the number of lattice points in $B^k_p(\frac{k^{1/p}}{\eps}(1+\frac32\eps))$ (the $k$-dimensional $\ell_p$-ball of radius $ \frac{k^{1/p}}{\eps}(1+\frac32\eps)$). By \Cref{lem:lattice}, this number is bounded by $V^k_p( \frac{k^{1/p}}{\eps}(1+\frac32\eps))=O(\frac{1}{\eps})^{k}$.

For every $1\le j\le k$ we have $\Vert p_{c(j)}-y_{f(j)}\Vert_2\le \frac{2}{k^{1/p}}$, and thus
\[
\Vert x_{j}-y_{f(j)}\Vert_{2}\le\Vert x_{j}-p_{c(j)}\Vert_{2}+\Vert p_{c(j)}-y_{f(j)}\Vert_{2}\le\alpha_{j}\frac{\eps}{k^{1/p}}+\frac{2}{k^{1/p}}.
\]
Therefore, once $f$ and $(\alpha_1,\dots,\alpha_k)$ are fixed, it is sufficient to count the number of ways to choose a curve $(x_1,x_2,\ldots,x_k)$ such that $x_j  \in G\left(y_{f(j)},\alpha_j\frac{\eps}{k^{1/p}}+\frac{2}{k^{1/p}}\right)$ for every $1\le j \le k$.

By \Cref{cor:grid}, as we use the $\frac{\eps}{m^{1/p}}$ grid,  the number of grid points in $G\left(y_{c(j)},\alpha_j\frac{\eps}{k^{1/p}}+\frac{2}{k^{1/p}}\right)$ is $O\left(1+\frac{m^{1/p}}{\eps}\left(\alpha_{j}\frac{\eps}{k^{1/p}}+\frac{2}{k^{1/p}}\right)\right)^{d}\le O\left(\frac{m}{k}\right)^{d/p}\cdot\left(\frac{3}{\eps}+\alpha_{j}\right)^{d}$.
Thus the number of ways to choose $(x_1,x_2,\dots,x_k)$ is at most $O\left(\frac{m}{k}\right)^{dk/p}\left(\Pi_{j=1}^{k}\left(\frac{3}{\eps}+\alpha_{j}\right)\right)^{d}$.
By the inequality of arithmetic and geometric means we have
\begin{align*}
\Pi_{j=1}^{k}\left(\frac{3}{\eps}+\alpha_{j}\right) & =\left(\Pi_{j=1}^{k}\left(\frac{3}{\eps}+\alpha_{j}\right)^{p}\right)^{1/p}\\
& \le\left(\frac{\sum_{j=1}^{k}(\frac{3}{\eps}+\alpha_{j})^{p}}{k}\right)^{k/p}\\
& \le\left(\frac{\|(\frac{3}{\eps},\dots,\frac{3}{\eps})\|_{p}+\|(\alpha_{1},\dots,\alpha_{k})\|_{p}}{k^{1/p}}\right)^{k}\\
& \le\left(\frac{\frac{3}{\eps}\cdot k^{1/p}+\frac{k^{1/p}}{\eps}(1+\frac{3}{2}\eps)}{k^{1/p}}\right)^{k}=O\left(\frac{1}{\eps}\right)^{k}~.
\end{align*}
It follows that for some fixed $f$ and $(\alpha_1,\dots,\alpha_k)$, the number of ways to choose $(x_1,\dots,x_k)$ is bounded by $O(\frac{1}{\eps})^{kd}\cdot\left(\frac{m}{k}\right)^{kd/p}$.

We conclude that, 
$|\I_{i}|={O(2)}^{k}\cdot O(\frac{1}{\eps})^{k}\cdot(\frac{1}{\eps})^{kd}\cdot\left(\frac{m}{k}\right)^{kd/p}=O(\frac{1}{\eps})^{k(d+1)}\cdot\left(\frac{m}{k}\right)^{kd/p}$, and thus $|\I|=n\cdot O(\frac{1}{\eps})^{k(d+1)}\cdot\left(\frac{m}{k}\right)^{kd/p}$.

Following the same lines as in the counting argument, an efficient implementation of the preprocessing time is self evident.
\begin{theorem}\label{thm:lp2Assym}
	There exists a data structure for the Asymmetric $(1+\eps,r)$-ANNC under $\dpt$, with $n\cdot O(\frac{1}{\eps})^{k(d+1)}\cdot\left(\frac{m}{k}\right)^{kd/p}$ space, $nm\cdot \left(O(m^2k+md)+O(\frac{1}{\eps})^{k(d+1)}\cdot\left(\frac{m}{k}\right)^{kd/p}\right)$ preprocessing time, and $O(kd)$ query time.
\end{theorem}

\section{Approximate range counting}\label{sec:counting}
In the range counting problem for curves, we are given a set $\C$ of $n$ curves, each consisting of $m$ points in $d$ dimensions, and a distance measure for curves $\delta$. The goal is to preprocess $\C$ into a data structure that given a query curve $Q$ and a threshold value $r$, returns the number of curves that are within distance $r$ from $Q$.

In this section we consider the following approximation version of range counting for curves, in which $r$ is part of the input. Note that by storing pointers to curves instead of just counters, we can obtain a data structure for the approximate range searching problem (at the cost of an additional $O(n)$-factor to the storage space).

\begin{problem}[$(1+\eps,r)$-approximate range-counting for curves]\label{prb:r-range-counting}
	Given a parameter $r$ and $0<\eps\le 1$, preprocess $\C$ into a data structure that given a query curve $Q$, returns the number of all the input curves whose distance to $Q$ is at most $r$ plus possibly additional input curves whose distance to $Q$ is greater than $r$ but at most $(1+\eps)r$.
\end{problem}

We construct the dictionary $\D$ (implemented as a dynamic hash table, or a prefix tree) for the curves in $\I$ as in \Cref{sec:l_p2-distance}, as follows. 
For each $1\le i\le n$ and curve $\overline{Q}\in \I_i$, if $\overline{Q}$ is not in $\D$, insert it into $\D$ and initialize $C(\overline{Q})\leftarrow 1$. Otherwise, if $\overline{Q}$ is in $\D$, update $C(\overline{Q})\leftarrow C(\overline{Q})+1$. Notice that $C(\overline{Q})$ holds the number of curves from $\C$ that are within distance $(1+\frac{\eps}{2})r$ to $\overline{Q}$.
Given a query curve $Q$, we compute $Q'$ as in \Cref{sec:l_p2-distance}. If $Q'$ is in $\D$, we return $C(Q')$, otherwise, we return 0.

Clearly, the storage space, preprocessing time, and query time are similar to those in \Cref{sec:l_p2-distance}. We claim that the query algorithm returns the number of curves from $\C$ that are within distance $r$ to $Q$ plus possibly additional input curves whose distance to $Q$ is greater than $r$ but at most $(1+\eps)r$. Indeed, let $C_i$ be a curve such that $\dpt(C_i,Q)\le r$. As shown in \Cref{sec:l_p2-distance} we get $\dpt(C_i,Q')\le (1+\frac{\eps}{2})r$, so $Q'$ is in $\I_i$ and $C_i$ is counted in $C(Q')$. Now let $C_i$ be a curve such that $\dpt(C_i,Q)>(1+\eps)r$. If $\dpt(C_i,Q')\le (1+\frac{\eps}{2})r$, then by a similar argument (switching the rolls of $Q$ and $Q'$) we get that $\dpt(C_i,Q')\le (1+\eps)r$, a contradiction. So $\dpt(C_i,Q')> (1+\frac{\eps}{2})r$, and thus $C_i$ is not counted in $C(Q')$.

We obtain the following theorem.
\begin{theorem}
	There exists a data structure for the $(1+\eps,r)$-approximate range-counting for curves under $\ell_{p,2}$-distance, with $n \cdot O(\frac{1}{\eps})^{m(d+1)}$ space, $n\log(\frac{n}{\eps})\cdot O(\frac{1}{\eps})^{m(d+1)}$ preprocessing time, and $O(md\log(\frac{nmd}{\eps}))$ query time. (Under DFD, the exponent in the bounds for the space and preprocessing time is $md$ rather than $m(d+1)$.)
\end{theorem}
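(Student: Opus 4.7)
The plan is to reuse essentially verbatim the ANNC data structure built in \Cref{sec:l_p2-distance}, replacing the single stored witness curve $C(\overline{Q})$ by an integer counter. Concretely, I would first construct the same set $\I=\bigcup_{i=1}^n \I_i$ of representative curves, where $\I_i$ is the set of grid curves $\overline{Q}$ with $\dpt(C_i,\overline{Q})\le(1+\frac{\eps}{2})r$. By \Cref{clm:curves} we have $|\I|\le n\cdot O(\frac{1}{\eps})^{m(d+1)}$, which already yields the claimed storage bound.

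Second, I would implement the dictionary $\D$ using the deterministic prefix tree of \Cref{appendix:deterministic-construction}, so that each insertion and lookup costs $O(md\log(\frac{nmd}{\eps}))$. The insertion loop is: for each $1\le i\le n$ and each $\overline{Q}\in\I_i$, if $\overline{Q}\notin\D$ insert it with $C(\overline{Q})\leftarrow 1$, else set $C(\overline{Q})\leftarrow C(\overline{Q})+1$. At the end, $C(\overline{Q})$ equals the number of $C_i$ for which $\overline{Q}\in\I_i$, i.e., the number of input curves with $\dpt(C_i,\overline{Q})\le(1+\frac{\eps}{2})r$. The preprocessing time is $n\cdot O(\frac{1}{\eps})^{m(d+1)}$ for building $\I$ plus $|\I|\cdot O(md\log(\frac{nmd}{\eps}))$ for the insertions, which collapses to $n\log(\frac{n}{\eps})\cdot O(\frac{1}{\eps})^{m(d+1)}$ once the polynomial factor $md$ and the $\log(md/\eps)$ term are absorbed into the exponential $(\frac{1}{\eps})^{m(d+1)}$. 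The query algorithm rounds the $m$ vertices of $Q$ coordinate-wise to obtain $Q'\in\G$ in $O(md)$ time and looks it up, yielding total query time $O(md\log(\frac{nmd}{\eps}))$.

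For correctness, I would apply in both directions the rounding analysis already used to prove \Cref{thm:lp2}. If $\dpt(C_i,Q)\le r$, then taking an optimal alignment $\tau$ between $C_i$ and $Q$ and applying Minkowski's inequality to the pairwise perturbation $\|q_t-q'_t\|_2\le \frac{\eps}{2(2m)^{1/p}}$ gives $\sigma_{p,2}(\tau(C_i,Q'))\le(1+\frac{\eps}{2})r$, so $Q'\in\I_i$ and $C_i$ is counted in $C(Q')$. Conversely, if $C_i$ is counted in $C(Q')$, then $\dpt(C_i,Q')\le(1+\frac{\eps}{2})r$, and applying the same rounding bound to an optimal alignment between $C_i$ and $Q'$ gives $\dpt(C_i,Q)\le(1+\frac{\eps}{2})r+\frac{\eps r}{2}=(1+\eps)r$. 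Hence every curve within distance $r$ is counted, and no curve beyond distance $(1+\eps)r$ is counted, matching \Cref{prb:r-range-counting}.

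The main subtlety to watch for is that $\ell_{p,2}$ does not satisfy the triangle inequality, so the two directions above cannot be phrased as a single triangle argument at the level of distances; however, working at the level of the $\ell_{p,2}$-cost of a fixed alignment (where Minkowski's inequality does apply), as is already done in \Cref{sec:l_p2-distance}, makes both directions symmetric and routine. The parenthetical DFD variant follows by using the data structure of \Cref{sec:frechet} instead, with the tighter bound $|\I_i|\le O(\frac{1}{\eps})^{md}$ from \Cref{clm:curvesDFD} and the usual triangle inequality for $\dfd$, yielding the improved exponent $md$ in both the space and preprocessing bounds.
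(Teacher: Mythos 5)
Your proposal matches the paper's own construction and argument essentially verbatim: the same counter-augmented dictionary over $\I$, the same prefix-tree implementation for the deterministic bounds, and the same two-directional rounding analysis at the level of a fixed alignment's $\ell_{p,2}$-cost to circumvent the failure of the triangle inequality. No gaps; this is the paper's proof.
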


\section{Simplification in $d$-dimensions}\label{appendix:simplification}
The algorithm of Bereg et al.~\cite{BJWYZ08} receives as input a curve $C$ consisting of $m$ points in $\R^3$, and a parameter $r>0$. In $O(m\log m)$ time, it returns a curve $\Pi$ such that $\dfd(C,\Pi)\le r$, and $\Pi$ has the minimum number of vertices among all curves within distance $r$ from $C$.
The algorithm operates in a greedy manner, by repeatedly executing Megiddo's~\cite{Meg84}  minimum enclosing ball (MEB) algorithm for points in $\R^3$, which takes linear time.

We generalize the algorithm of Bereg et al. for curves in $\R^d$, by using an algorithm presented by Kumar et al.~\cite{KMY03} for approximated minimum enclosing ball (AMEB) in $\R^d$. Formally, given a set $A$ of $n$ points in $\R^d$ and a parameter $\eps\in(0,1]$, the goal is to find an enclosing ball of $A$ with radius $r>0$, where the minimum enclosing ball of $A$ has radius at least $\frac{r}{1+\eps}$. The algorithm of~\cite{KMY03} can find an AMEB in $O(\frac{nd}{\eps}+\eps^{-4.5}\log\frac{1}{\eps})$ time. In particular, given an additional parameter $r>0$, this algorithm either returns an enclosing ball of $A$ with radius $(1+\eps)r$, or declares that the minimum enclosing ball of $A$ has radius larger than $r$.

Next, we describe our modified algorithm. Consider a curve $C=(x_1,\dots,x_m)$, and denote  $C[i,j]=(x_i,\dots,x_j)$. The following sub-procedure takes as input a curve $A$ and returns a point $y$ and an index $s$, such that the ball with radius $(1+\eps)r$ centered at $y$ covers the prefix $A[1,s]$, and (if $s<|A|$) the minimum enclosing ball of $A[1,s+1]$ has radius larger than $r$.
\begin{enumerate}
	\item By iterative probing, using an algorithm for AMEB, find some $t$ such that $A[1,2^t]$ can be covered by a ball of radius $(1+\eps)r$, while $A[1,2^{t+1}]$ cannot be covered by a ball of radius $r$. If all the points in $A$ can be enclosed by a single ball of radius $(1+\eps)r$ centered at $y$, simply return $y$ and $|A|$.
	\item By binary search, again using an algorithm for AMEB, find some $s\in [2^{t},2^{t+1})$ such that $A[1,s]$ can be covered by a ball of radius $(1+\eps)r$, and $A[1,s+1]$ cannot be covered by a ball of radius $r$. Let $y\in\R^d$ be the center of this ball. Return $y$ and $s$.
\end{enumerate}
Starting from the input $A=C[1,m]$, repeat the above sub-procedure such that in each step the input is the suffix of $C$ that was not yet covered by the previous steps (i.e. $A[s+1,m]$). Let $(y_1,\dots,y_q)$ be the sequence of output points.

\Cref{lem:FreshetSimplification} is an easy corollary of the following lemma.
\begin{lemma}
	Let $C$ be a curve consisting of $m$ points in $\R^d$. Given parameters $r>0$, and $\eps\in(0,1]$, the algorithm above runs in $O\left(\frac{d\cdot m\log m}{\eps}+m\cdot\eps^{-4.5}\log\frac{1}{\eps}\right)$ time and returns a curve $\Pi=(y_1,\dots,y_q)$ such that $\dfd(C,\Pi)\le (1+\eps)r$. Furthermore, for every curve $\Pi'$ with less than $q$ points, it holds that $\dfd(C,\Pi')>r$.
\end{lemma}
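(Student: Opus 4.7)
The plan is to prove three claims in sequence: the runtime bound, the upper bound $\dfd(C,\Pi)\le(1+\eps)r$, and the optimality statement that no curve with fewer than $q$ points achieves $\dfd\le r$. The first two are essentially bookkeeping; the main obstacle is the optimality statement, for which I will use an inductive ``coverage domination'' argument against the greedy sub-procedure.

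For the runtime, let $s_i$ denote the length of the contiguous block covered in iteration $i$ of the sub-procedure, so that $\sum_i s_i = m$. Writing $t_i=\lceil \log s_i\rceil$, the probing phase makes $O(t_i)$ AMEB calls on prefixes of total size $O(s_i)$ (a geometric series), and the binary search phase makes $O(t_i)$ further AMEB calls, each on a prefix of size $O(s_i)$. Since an AMEB call on $s$ points costs $O(sd/\eps+\eps^{-4.5}\log(1/\eps))$ by \cite{KMY03}, iteration $i$ costs $O(t_i s_i d/\eps + t_i\eps^{-4.5}\log(1/\eps))$ in total. Summing, the $d/\eps$ contribution is $O(d/\eps)\sum_i t_i s_i \le O(d\log m/\eps)\sum_i s_i = O(dm\log m/\eps)$, and the other contribution is $\eps^{-4.5}\log(1/\eps)\cdot \sum_i t_i = O(m\eps^{-4.5}\log(1/\eps))$, where I use $\sum_i t_i \le q+\sum_i \log s_i \le q + q\log(m/q) = O(m)$ by Jensen's inequality (concavity of $\log$) applied to $\sum_i s_i=m$.

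For the Fr\'echet upper bound, let $0=\ell_0<\ell_1<\dots<\ell_q=m$ be the cumulative covered indices, so iteration $i$ yields a point $y_i$ within $\ell_2$-distance $(1+\eps)r$ of every point of $C[\ell_{i-1}+1,\ell_i]$. The monotone alignment that matches each $x_j$ with $\ell_{i-1}+1\le j\le\ell_i$ to $y_i$ (and transitions via the step $(\ell_i,i)\to(\ell_i+1,i+1)$) is valid and has Fr\'echet cost $\max_{i,j}\|x_j-y_i\|_2\le(1+\eps)r$, yielding $\dfd(C,\Pi)\le(1+\eps)r$.

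For the optimality claim, I argue by contradiction: suppose there is a curve $\Pi'=(z_1,\dots,z_{q'})$ with $q'<q$ and $\dfd(C,\Pi')\le r$. Any optimal alignment assigns $z_i$ a contiguous block $C[a_i,b_i]$ of $C$ within $\ell_2$-distance $r$, with $a_1=1$, $b_{q'}=m$, and $a_{i+1}\le b_i+1$. I will show by induction on $i=1,\dots,q'$ that $b_i\le \ell_i$. The base case uses that $z_1$ covers $C[1,b_1]$ within a ball of radius $r$, while the greedy stopping criterion of iteration $1$ (valid since $q\ge 2$, hence $\ell_1<m$) certifies that $C[1,\ell_1+1]$ admits \emph{no} ball of radius $r$; this forces $b_1\le\ell_1$. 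For the inductive step, $a_i\le b_{i-1}+1\le\ell_{i-1}+1$ implies that $z_i$ covers $C[\ell_{i-1}+1,b_i]\subseteq C[a_i,b_i]$ within radius $r$, and since $i\le q'<q$ gives $\ell_i<m$, the greedy stopping criterion of iteration $i$ forces $b_i\le\ell_i$. Applying this at $i=q'$ produces $m=b_{q'}\le\ell_{q'}<m$, a contradiction, whence $q'\ge q$ and the optimality follows.
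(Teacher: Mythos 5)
Your proof is correct and follows essentially the same route as the paper's proof sketch: the same per-iteration accounting of AMEB calls with concavity of $\log$ for the runtime, the obvious alignment for the upper bound, and the greedy-domination induction (each prefix of an $r$-close curve can cover no more of $C$ than the corresponding greedy prefix) for optimality. The paper leaves that last induction as ``straightforward''; you have simply written it out explicitly, including the needed observations that an alignment assigns each $z_i$ a contiguous block with $a_{i+1}\le b_i+1$ and that the AMEB failure at $C[\ell_{i-1}+1,\ell_i+1]$ certifies non-coverability by any radius-$r$ ball.
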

\begin{proof}[Proof sketch.]
	We start by analyzing the running time for a single iteration of the sub-procedure, when using the algorithm of~\cite{KMY03} to find an AMEB. The total time for the first step of the sub-procedure (finding $t$) is 
	$$\sum_{i=1}^{t+1}O(\frac{2^i\cdot d}{\eps}+\eps^{-4.5}\log\frac{1}{\eps})=O(\frac{2^t\cdot d}{\eps}+t\cdot\eps^{-4.5}\log\frac{1}{\eps}).$$
	In the second step, there are $O(t)$ executions of \cite{KMY03} on a set of size at most $2^{t+1}$, so the total time for this step is $t\cdot O(\frac{2^t\cdot d}{\eps}+\eps^{-4.5}\log\frac{1}{\eps})$. 
	
	Let $m_i$ be the length of the subcurve covered by the point $y_i$ that was found in the $i$'th iteration of the sub-procedure. The total time spent for finding $y_i$ is therefore ${\log m_i\cdot O(\frac{m_i\cdot d}{\eps}+\eps^{-4.5}\log\frac{1}{\eps})}$, and the total running time of the algorithm is 
	
	\[	
	\sum_{i=1}^{q}\log m_{i}\cdot O\left(\frac{m_{i}\cdot d}{\eps}+\eps^{-4.5}\log\frac{1}{\eps}\right)=O\left(\frac{d\cdot m\log m}{\eps}+m\cdot\eps^{-4.5}\log\frac{1}{\eps}\right)~,
	\]
	where we used the concavity of the $\log$ function, and the fact $\sum_{i=1}^{q}m_i=m$.
	
	Next we argue the correctness. Clearly, $\dfd(C,\Pi)\le (1+\eps)r$. Let $s_0=0,s_1,\dots,s_q=m$ be the sequence of indices (of vertices in $C$) found during the execution of the algorithm, such that the ball of radius $(1+\eps)r$ around $y_i$ covers $C[s_{i-1}+1,s_{i}]$. It follows by a straightforward induction that every curve $\Pi'$ with less that $i$ points will be at distance greater than $r$ from $C[1,s_{i-1}+1]$. The lemma now follows.
\end{proof}

\paragraph{Acknowledgment.}
We wish to thank Boris Aronov for helpful discussions on the problems studied in this paper.

\addcontentsline{toc}{section}{References}
\bibliographystyle{alphaurlinit}
\bibliography{refs}

\newpage

\appendix

\addcontentsline{toc}{section}{Appendix}
\addtocontents{toc}{\protect\setcounter{tocdepth}{0}}

\section{Remark on dimension reduction}\label{appendix:DimensionReduction}
In general, when the dimension $d$ is large, i.e. $d\gg\log(nm)$, one can use dimension reduction (using the celebrated Johnson-Lindenstrauss lemma~\cite{JL84}) in order to achieve a better running time, at the cost of inserting randomness in the prepossessing and query procedure.
However, such an approach can work only against an oblivious adversary, as it will necessarily fail for some curves.
Recently Narayanan and Nelson \cite{NN18} (improving \cite{EFN17,MMMR18}) proved a terminal version of the JL-lemma. Given a set $K$ of $k$ points in $\mathbb{R}^d$ and $\eps\in(0,1)$, there is a dimension reduction function $f:\mathbb{R}^d\rightarrow\mathbb{R}^{O(\frac{\log k}{\eps^2})}$ such that for every $x\in K$ and $y\in \mathbb{R}^d$ it holds that $\|x-y\|_2\le\|f(x)-f(y)\|_2\le(1+\eps)\cdot\|x-y\|_2$.

This version of dimension reduction can be used such that the query remains deterministic and always succeeds. The idea is to take all the $nm$ points from all the input curves to be the terminals, and let $f$ be the terminal dimension reduction. We transform each input curve $P = (p_1,\dotsc,p_m)$ into $f(P) = (f(p_1),\dotsc,f(p_m))$, a curve in $\mathbb{R}^{O(\frac{\log nm}{\eps^2})}$. Given a query $Q=(q_1,\dotsc,q_{m})$ we transform it to $f(Q) = (f(q_1),\dotsc,f(q_{m}))$. Since the pairwise distances between every query point to all input points are preserved, so is the distance between the curves. Specifically, the $\dpt$ distance w.r.t. any alignment $\tau$ is preserved up to a $1+\eps$ factor, and therefore we can reliably use the answer received using the transformed curves.

\section{A deterministic construction using a prefix tree}\label{appendix:deterministic-construction}
When implementing the dictionary $\D$ as a hash table, the construction of the data structure is randomized and thus in the worst case we might get higher prepeocessing time. To avoid this, we can implement $\D$ as a prefix tree.

\subsection{Discrete \frechet\ distance}
In this section we describe the implementation of $\D$ as a prefix tree in the case of ANNC under DFD.

We can construct a prefix tree $\T$ for the curves in $\I$, where any path in $\T$ from the root to a leaf corresponds to a curve that is stored in it.
For each $1\le i\le n$ and curve $\overline{Q}\in \I_i$, if $\overline{Q}\notin \T$, insert $\overline{Q}$ into $\T$, and set $C(\overline{Q})\leftarrow C_i$. 

Each node $v\in\T$ corresponds to a grid point from $\G$.
Denote the set of $v$'s children by $N(v)$. We store with $v$ a multilevel search tree on $N(v)$, with a level for each coordinate. 
The points in $\G$ are the grid points contained in $nm$ balls of radius $(1+\eps)r$. Thus when projecting these points to a single dimension, the number of 1-dimensional points is at most $nm\cdot\frac{\sqrt{d}(1+\eps)2r}{\eps r}=O(\frac{nm\sqrt{d}}{\eps})$. So in each level of the search tree on $N(v)$ we have $O(\frac{nm\sqrt{d}}{\eps})$ 1-dimensional points, so the query time is $O(d\log(\frac{nmd}{\eps}))$.

Inserting a curve of length $m$ to the tree $\T$ takes $O(md\log(\frac{nmd}{\eps}))$ time.
Since $\T$ is a compact representation of $|\I|=n\cdot O(\frac{1}{\eps})^{dm}$ curves of length $m$, the number of nodes in $\T$ is $m\cdot |\I|=nm\cdot O(\frac{1}{\eps})^{dm}$. Each node $v\in\T$ contains a search tree for its children of size $O(d\cdot|N(v)|)$, and $\sum_{v\in\T}|N(v)|=nm\cdot O(\frac{1}{\eps})^{dm}$ so the total space complexity is $O(nmd)\cdot O(\frac{1}{\eps})^{md}=n\cdot O(\frac{1}{\eps})^{md}$. Constructing $\T$ takes $O(|\I|\cdot md\log(\frac{nmd}{\eps}))=n\log(\frac{nmd}{\eps})\cdot O(\frac{1}{\eps})^{md}$ time.

\begin{theorem}
	There exists a data structure for the $(1+\eps,r)$-ANNC under DFD, with $n\cdot O(\frac{1}{\eps})^{dm}$ space, $n\cdot \log(\frac{n}{\eps})\cdot O(\frac{1}{\eps})^{md}$ preprocessing time, and $O(md\log(\frac{nmd}{\eps}))$ query time.
\end{theorem}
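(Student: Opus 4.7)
The plan is to reuse the set $\I = \bigcup_{i=1}^n \I_i$ and all of the analysis from Section 3 unchanged, and only swap out the hash-table implementation of the dictionary $\D$ for a deterministic prefix tree $\T$. Each curve $\overline{Q} = (x_1,\dots,x_m) \in \I$ has its vertices in the grid, so it corresponds naturally to a root-to-leaf path of length $m$ in $\T$, with $C(\overline{Q})$ stored at the leaf. The query algorithm is identical to the one in Section 3: round each query vertex $q_k$ to the nearest grid point $q_k'$ in $O(d)$ time, then descend $\T$ by looking up $q_1', q_2', \dots, q_m'$ in turn. Correctness is then inherited verbatim from the proof of \Cref{thm:DFD}, since the only thing that has changed is how we look curves up.

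The main quantitative point is to show that a single step of the descent—looking up the child of a node $v$ corresponding to a given $d$-dimensional grid point—can be done deterministically in $O(d\log(\tfrac{nmd}{\eps}))$ time. For this I would equip each internal node $v$ with a standard $d$-level balanced search tree on its children $N(v)$, one level per coordinate. The key observation, already used in the excerpt, is that the whole vertex universe $\G$ consists of lattice points inside $nm$ balls of radius $(1+\eps)r$ centered at input vertices, so the projection of $\G$ onto any single axis contains only $nm \cdot \frac{\sqrt{d}(1+\eps)2r}{\eps r} = O(\tfrac{nm\sqrt{d}}{\eps})$ distinct values. Hence each of the $d$ levels of the per-node search tree has $O(\tfrac{nm\sqrt{d}}{\eps})$ keys, giving $O(d\log(\tfrac{nmd}{\eps}))$ search time per node and $O(md\log(\tfrac{nmd}{\eps}))$ total query time.

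For the space and preprocessing bounds I would count as follows. The trie $\T$ has at most $m\cdot|\I| = nm\cdot O(\tfrac{1}{\eps})^{md}$ nodes, and $\sum_{v\in\T}|N(v)|$ is bounded by the same quantity, so the per-node multilevel trees contribute a total of $O(d) \cdot nm\cdot O(\tfrac{1}{\eps})^{md} = n\cdot O(\tfrac{1}{\eps})^{md}$ space. Construction inserts each of the $|\I|$ curves in $O(md\log(\tfrac{nmd}{\eps}))$ time; combined with the $n\cdot O(\tfrac{1}{\eps})^{md}$ time from \Cref{clm:curvesDFD} to enumerate and verify $\I$, the total preprocessing is $n\log(\tfrac{n}{\eps})\cdot O(\tfrac{1}{\eps})^{md}$ as claimed, after absorbing polynomial factors in $m$ and $d$ into the base of the exponent.

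The only subtlety I anticipate is bookkeeping around duplicates: several input curves $C_i$ may contribute the same representative $\overline{Q}$, so insertions must test membership before creating a new leaf; this is free with the same multilevel tree lookup. Everything else—the size of $\I_i$, the rounding argument, and the two triangle-inequality applications that give the $(1+\eps)r$ guarantee—is exactly as in \Cref{thm:DFD}, so no new analysis is required beyond the lookup-time accounting above.
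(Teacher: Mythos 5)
Your proposal is correct and matches the paper's own construction in essentially every detail: the same prefix tree over $\I$ with a per-node multilevel search tree (one level per coordinate), the same bound of $O(\frac{nm\sqrt{d}}{\eps})$ on the single-coordinate projection of $\G$ yielding $O(d\log(\frac{nmd}{\eps}))$ per descent step, and the same node-counting argument for the space and preprocessing bounds.
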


Similarly, for the asymmetric case we obtain the following theorem.
\begin{theorem}\label{thm:DFDassym-deterministic}
	There exists a data structure for the asymmetric $(1+\eps,r)$-ANNC under DFD, with $n\cdot O(\frac{1}{\eps})^{dk}$ space, $nm\log(\frac{n}{\eps})\cdot \left(O(d\log m)+O(\frac{1}{\eps})^{kd}\right)$ preprocessing time, and $O(kd\log(\frac{nkd}{\eps}))$ query time.
\end{theorem}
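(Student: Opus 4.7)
The plan is to combine the asymmetric construction from \Cref{sec:asymmetric-DFD} with the prefix-tree implementation of the dictionary described for the symmetric case in \Cref{appendix:deterministic-construction}. Concretely, I would compute simplifications $\Pi_1,\dotsc,\Pi_n$ of length $k$ as in \Cref{thm:DFDassym} by invoking \Cref{lem:FreshetSimplification} with $\eps=1$ (so $\dfd(C_i,\Pi_i)\le 2r$ whenever a length-$k$ simplification within distance $r$ exists), then build the $(1+\eps/2)$-candidate set $\I=\bigcup_i\I_i$ just as there, using radius $4r$ for membership in $\I'_i$ and radius $(1+\eps/2)r$ with respect to $C_i$ for membership in $\I_i$. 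The correctness argument is then word-for-word the one in \Cref{sec:asymmetric-DFD}; the only change is how $\I$ is stored and queried.

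Next I would store $\I$ in a prefix tree $\T$ of depth $k$, with each root-to-leaf path encoding a curve, exactly as in \Cref{appendix:deterministic-construction}. The crucial step is to bound the 1-dimensional complexity at each node. All vertices of curves in $\I$ come from $\G=\bigcup_i G_i$, where $G_i$ is the union of $k$ grid balls of radius $O(r)$ around the vertices of $\Pi_i$; each ball contributes at most $O(\sqrt{d}/\eps)$ points in any single coordinate, so projecting $\G$ onto one axis gives only $O(nk\sqrt d/\eps)$ distinct values. Equipping every node $v\in\T$ with a $d$-level orthogonal search tree on its children $N(v)$ therefore requires $O(d\lvert N(v)\rvert)$ space per node, and a descent through one level costs $O(d\log(nkd/\eps))$.

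For the query, I would round the query $Q=(q_1,\dotsc,q_k)$ coordinate-wise to the nearest grid point, obtaining $Q'$, and walk down $\T$ using the $d$-level search tree at each node to find the child corresponding to $q'_j$; if at any point the sought child is missing, report NO. Total query time is $k$ descents at $O(d\log(nkd/\eps))$ each, giving $O(kd\log(nkd/\eps))$. For the space, summing $O(d\lvert N(v)\rvert)$ over all nodes gives $O(d)\cdot|\T|=O(dk)\cdot|\I|=n\cdot O(1/\eps)^{kd}$, matching \Cref{clm:curvesDFD} applied with $m$ replaced by $k$. The preprocessing consists of (i) the $n$ simplifications in time $O(nmd\log m)$ from \Cref{lem:FreshetSimplification}, (ii) enumerating the $n\cdot O(1/\eps)^{kd}$ candidates and verifying the two \frechet\ constraints in time $O(mkd)$ each, and (iii) inserting each surviving candidate into $\T$ at $O(kd\log(nkd/\eps))$ per insertion, yielding the stated $nm\log(n/\eps)\cdot(O(d\log m)+O(1/\eps)^{kd})$ bound after absorbing polynomial factors into the exponential term.

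The main obstacle I expect is the 1-dimensional complexity bound that powers the query time: one must verify that even though $|\I|$ is exponential in $kd$, the underlying vertex universe $\G$ has only $O(nk\sqrt d/\eps)$ distinct values in any single coordinate, so that a balanced search structure on each node's children has depth $O(\log(nkd/\eps))$ rather than depth $\Theta(\log|\I|)=\Theta(kd\log(n/\eps))$. This is what saves a factor of $kd$ in the query time relative to a naive binary search tree over $\I$, and it is specific to the fact that $\G$ is a subset of a grid confined to $O(nk)$ small balls; everything else is a bookkeeping extension of \Cref{thm:DFDassym} and \Cref{appendix:deterministic-construction}.
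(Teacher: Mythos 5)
Your proposal is correct and follows exactly the paper's route: the paper obtains \Cref{thm:DFDassym-deterministic} by combining the asymmetric construction of \Cref{sec:asymmetric-DFD} (simplifications via \Cref{lem:FreshetSimplification}, candidate sets $\I'_i$ at radius $4r$ and $\I_i$ at radius $(1+\frac{\eps}{2})r$) with the prefix-tree dictionary of \Cref{appendix:deterministic-construction}, including the same per-coordinate bound of $O(\frac{nk\sqrt{d}}{\eps})$ grid values that yields the $O(kd\log(\frac{nkd}{\eps}))$ query time. In fact you have spelled out the bookkeeping in more detail than the paper, which states the asymmetric deterministic case only as a direct analogue of the symmetric one.
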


\subsection{$\ell_{p,2}$-distance}
For the case of ANNC under $\ell_{p,2}$-distance, the total number of curves stored in the tree $\T$ is roughly the same as in the case of DFD. We only need to show that for a given node $v$ of the tree $\T$, the upper bound on the size and query time of the search tree associated with it are similar.

The grid points corresponding to the nodes in $N(v)$ are from $n$ sets of $m$ balls with radius $(1+\eps)$.
When projecting the grid points in one of the balls to a single dimension, the number of 1-dimensional points is at most $\frac{m^{1/p}\sqrt{d}}{\eps}\cdot (1+\eps)$, so the total number of projected points is at most $\frac{nm^{1+\frac1p}\sqrt{d}}{\eps}\cdot (1+\eps)$.

Thus in each level of the search tree of $v$ we have $O(\frac{nm^2\sqrt{d}}{\eps})$ 1-dimensional points, so the query time is $O(d\log(\frac{nmd}{\eps}))$, and inserting a curve of length $m$ into the tree $\T$ takes $O(md\log(\frac{nmd}{\eps}))$ time. Note that the size of the search tree of $v$ remains $O(d\cdot|N(v)|)$.

We conclude that the total space complexity is $O(\frac{nm^2\sqrt{d}}{\eps})\cdot O(\frac{1}{\eps})^{m(d+1)}=n\cdot O(\frac{1}{\eps})^{m(d+1)}$, constructing $\T$ takes $O(|\I|\cdot md\log(nmd/\eps))=n\log(\frac{n}{\eps})\cdot O(\frac{1}{\eps})^{m(d+1)}$ time, and the total query time is $O(md\log(\frac{nmd}{\eps}))$.

\begin{theorem}
	There exists a data structure for the $(1+\eps,r)$-ANNC under $\ell_{p,2}$-distance, with $n\cdot O(\frac{1}{\eps})^{m(d+1)}$ space, $n\cdot \log(\frac{n}{\eps})\cdot O(\frac{1}{\eps})^{m(d+1)}$ preprocessing time, and $O(md\log(\frac{nmd}{\eps}))$ query time.
\end{theorem}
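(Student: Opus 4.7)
The plan is to reuse the set $\I = \bigcup_i \I_i$ constructed in Section~5 (of size at most $n\cdot O(1/\eps)^{m(d+1)}$ by Claim~10), but to replace the hash-table dictionary $\D$ by a deterministic prefix tree $\T$. Each root-to-leaf path of length $m$ in $\T$ will correspond to a curve of $\I$; at every internal node $v$ I attach a $d$-level search tree over the set $N(v)$ of its children, one level per coordinate, so that from $v$ one can locate the child labeled by a given grid point in time $O(d\log(|N(v)|+U))$, where $U$ is an upper bound on the size of the coordinate universe. Once these search trees are in place, the query algorithm is identical to the one in Section~5: round $Q$ to $Q'$ in $O(md)$ time, then descend the prefix tree one vertex at a time, returning $C(Q')$ if the full path is present.

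The main obstacle, and the only place where this proof departs from the DFD construction in Appendix~B.1, is the fact that for $\ell_{p,2}$-distance the grid edge length is the much smaller value $\frac{\eps r}{(2m)^{1/p}\sqrt d}$. I would therefore bound the per-coordinate universe as follows. Every vertex appearing in $\T$ comes from $\G = \bigcup_{i,j} G(p_j^i,(1+\eps)r)$, i.e.\ from $nm$ balls of radius $(1+\eps)r$. Projecting one such ball onto any coordinate axis yields at most
\[
\frac{2(1+\eps)r}{\eps r/((2m)^{1/p}\sqrt d)} \;=\; O\!\left(\frac{m^{1/p}\sqrt d}{\eps}\right)
\]
distinct values, so across all $nm$ balls each coordinate contributes at most $O\!\left(\frac{nm^{1+1/p}\sqrt d}{\eps}\right) \le O\!\left(\frac{nm^{2}\sqrt d}{\eps}\right)$ one-dimensional points. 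Consequently a lookup within a single node's search tree costs $O(d\log(\tfrac{nmd}{\eps}))$, and inserting a whole curve into $\T$ costs $O(md\log(\tfrac{nmd}{\eps}))$.

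From this point the bookkeeping mirrors Appendix~B.1. The tree has at most $m\cdot|\I| = n\cdot O(1/\eps)^{m(d+1)}$ nodes, and $\sum_v |N(v)|$ is of the same order; folding in the $O(d)$-overhead of the per-node search trees, the total storage is $O(nmd)\cdot O(1/\eps)^{m(d+1)} = n\cdot O(1/\eps)^{m(d+1)}$, since $md$ is absorbed into the exponential. The preprocessing performs $|\I|$ insertions at cost $O(md\log(\tfrac{nmd}{\eps}))$ each, totaling $n\log(\tfrac n\eps)\cdot O(1/\eps)^{m(d+1)}$. The query descends $m$ levels at cost $O(d\log(\tfrac{nmd}{\eps}))$ per level, yielding $O(md\log(\tfrac{nmd}{\eps}))$ overall.

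Correctness requires nothing new: by exactly the alignment-wise triangle-inequality argument given after Claim~10, whenever some $C_i\in\C$ satisfies $\dpt(C_i,Q)\le r$, the rounded curve $Q'$ lies in $\I_i\subseteq\I$ and is therefore located by the prefix-tree descent, and the returned curve $C(Q')$ satisfies $\dpt(C(Q'),Q)\le(1+\eps)r$. The only nontrivial step of the plan is thus the universe-size bound above, which controls the $\log$-factor in the query time; once that is in hand, the theorem follows by the same arithmetic as in the DFD prefix-tree construction.
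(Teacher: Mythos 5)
Your proposal is correct and follows essentially the same route as the paper's own argument in the appendix: the paper likewise reuses the set $\I$ from the $\ell_{p,2}$ section, stores it in a prefix tree with a per-coordinate multilevel search tree at each node, and bounds the one-dimensional universe by projecting the $nm$ balls of radius $(1+\eps)r$ onto a coordinate axis with the finer grid spacing $\frac{\eps r}{(2m)^{1/p}\sqrt d}$, obtaining the same $O\bigl(\frac{nm^{1+1/p}\sqrt d}{\eps}\bigr)$ bound and hence the same $O(d\log(\frac{nmd}{\eps}))$ per-level cost. The remaining space, preprocessing, and correctness bookkeeping matches the paper's as well.
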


\section{Dealing with query curves and input curves of varying size}\label{appendix:DifferentSizes}

Notice that if an input curve $C_i$ has length $t<m$, then the size of the set of candidates $\I_i$ (and $\I'_i$ in the asymmetric case) can only decrease.

In addition, our assumption that all query curves are of length exactly $k$ can be easily removed by constructing $k$ data structures $\D_1,\dots,\D_k$, where $\D_i$ is our data structure constructed for query curves of length $i$ (instead of $k$), for $1 \le i \le k$. Clearly, the query time does not change. The storage space is multiplied by $k$, so for the case of DFD we have storage space $nk\cdot O(\frac{1}{\eps})^{kd}$, but $k<2^{kd}$, so the storage space remains $n\cdot O(\frac{1}{\eps})^{kd}$. Similarly, for the case of $\ell_{p,2}$-distance we obtain storage space of $n\cdot O(\frac{1}{\eps})^{k(d+1)}\cdot\left(\frac{m}{k}\right)^{kd/p}$.

\section{One-way alignments}\label{appendix:one-way}
\begin{claim}\label{clm:oneWayTriangleEnq}
	Let $A,B,C$ be three curves, and let $\tau_1$, $\tau_2$ be two one-way alignments such that $\tau_1$ matches $C$ to $A$ and $\tau_2$ matches $C$ to $B$.
	Then $\dpt(A,B)\le \sigma_{p,2}(\tau_1(C,A))+\sigma_{p,2}(\tau_2(C,B))$.
\end{claim}
\begin{proof}
	Denote by $k_A,k_B,k_C$ the lengths of the curves $A,B,C$ respectively.
	Consider the following algorithm that constructs an alignment $\tau$. 
	For every $1\le x\le k_C$, denote by $i_x,j_x$ the unique indexes such that $(x,i_x)\in \tau_1$ and $(x,j_x)\in \tau_2$. Add the pair $(i_x,j_x)$ to $\tau$ if it is not already there.
	
	First, we need to show that $\tau=\langle(i_1,j_1),\dots,(i_t,j_t)\rangle$ is a valid alignment.
	Clearly, $(i_1,j_1)=(1,1)$ because $(1,1)\in \tau_1$ and $(1,1)\in \tau_2$.
	Similarly, $(i_t,j_t)=(k_A,k_B)$ because $(k_C,k_A)\in \tau_1$ and $(k_C,k_B)\in \tau_2$.
	
	For any $1\le s<t$, consider the two consecutive pairs $(i_s,j_s),(i_{s+1},j_{s+1})\in \tau$. Let $x_1$ be an index such that $(x_1,i_s)\in \tau_1$ and $(x_1,j_s)\in \tau_2$, and $x_2$ an index such that $(x_2,i_{s+1})\in \tau_1$ and $(x_2,j_{s+1})\in \tau_2$. Since $\tau_1,\tau_2$ are one-way alignments, we have $x_1\neq x_2$. Moreover, since the algorithm added $(i_s,j_s)$ to $\tau$ before $(i_{s+1},j_{s+1})$, we have $x_1<x_2$. This implies that $i_{s+1}\ge i_s$ and $j_{s+1}\ge j_s$. Assume by contradiction that $i_{s+1} > i_s+1$, and let $x$ be the index such that $(x,i_s+1)\in\tau_1$, then $x_1<x<x_2$ and thus the algorithm adds a pair $(i_s+1,j)$ for some index $j$ after $(i_s,j_s)$ and before $(i_{s+1},j_{s+1})$, a contradiction. So we have $i_s\le i_{s+1} \le i_s+1$, and by symmetric arguments, $j_s\le j_{s+1} \le j_s+1$, and therefore $\tau$ is valid.
	
	Using the triangle inequality for the $\ell_p$ norm, we get that 
	\begin{align*}
	\dpt(A,B)\le\sigma_{p,2}(\tau(A,B)) & =\Big(\sum_{(i,j)\in\tau}\|a_{i}-b_{j}\|_{2}^{p}\Big)^{1/p}\\
	& \le\Big(\sum_{x=1}^{k_{C}}\|a_{i_{x}}-b_{j_{x}}\|_{2}^{p}\Big)^{1/p}\\
	& \le\Big(\sum_{x=1}^{k_{C}}\|a_{i_{x}}-c_{x}\|_{2}^{p}\Big)^{1/p}+\Big(\sum_{x=1}^{k_{C}}\|c_{x}-b_{j_{x}}\|_{2}^{p}\Big)^{1/p}\\
	& =\sigma_{p,2}(\tau_{1}(C,A))+\sigma_{p,2}(\tau_{2}(C,B))~.
	\end{align*}
\end{proof}

\end{document}